\documentclass[runningheads,a4paper]{llncs}

\usepackage[margin=1.2in]{geometry}
\usepackage{makeidx}
\usepackage{amsmath}
\usepackage{mathtools}
\usepackage{graphicx}
\usepackage{tikz}
\usepackage{wrapfig}
\usepackage{bm}
\usepackage{todonotes}
\usepackage{verbatim} 
\usepackage{algorithm}[ruled,vlined,linesnumbered]
\usepackage{algpseudocode}
\usepackage{mathtools}
\usepackage{pgfplots}

\newcounter{phase}[algorithm]
\newlength{\phaserulewidth}
\newcommand{\setphaserulewidth}{\setlength{\phaserulewidth}}
\newcommand{\phase}[1]{%
  \vspace{-1.25ex}
  \Statex\leavevmode\llap{\rule{\dimexpr\labelwidth+\labelsep}{\phaserulewidth}}\rule{\linewidth}{\phaserulewidth}
  \Statex\strut\refstepcounter{phase}\textit{\thephase~--~#1}
  \vspace{-1.25ex}\Statex\leavevmode\llap{\rule{\dimexpr\labelwidth+\labelsep}{\phaserulewidth}}\rule{\linewidth}{\phaserulewidth}}
\makeatother

\setphaserulewidth{.7pt}

\newtheorem{prop}{Proposition}
\newtheorem{examp}{Example}

\usetikzlibrary{shapes, positioning, decorations.pathmorphing, decorations.markings}

\mathtoolsset{showonlyrefs}

\DeclarePairedDelimiter\abs{\lvert}{\rvert}
\makeatletter
\let\oldabs\abs
\def\abs{\@ifstar{\oldabs}{\oldabs*}}
\makeatother

\DeclarePairedDelimiter\set{\{}{\}}
\makeatletter
\let\oldset\set
\def\set{\@ifstar{\oldset}{\oldset*}}
\makeatother

\newcommand{\memupdate}{\mathcal{M}}

\newcommand{\real}{\bbbr}

\newcommand{\childset}{C}
\newcommand{\nonleaf}{N}
\newcommand{\commit}{S}
\newcommand{\leaf}{L}

\newcommand{\role}{\rho}

\pagestyle{headings}

\DeclareMathOperator*{\argmin}{arg\,min}
\DeclareMathOperator*{\argmax}{arg\,max}

\begin{document}

\parindent=0pt
\baselineskip16pt
\parskip6pt
\parskip6pt
\titlerunning{SSE with Memory in Sequential Games}


\title{Computing Stackelberg Equilibrium with Memory in Sequential Games}
{\institute{Artificial Intelligence Center, Department of Computer Science, \\ Faculty of Electrical Engineering, Czech Technical University in Prague
}
\date{\today}

\author{Aditya Aradhye \thanks{Email address: a.aradhye@maastrichtuniversity.nl} \hskip6pt
Branislav Bo{\v s}ansk{\' y}\hskip6pt
Michael Hlav{\'a}{\v c}ek}}

\author{
        Aditya Aradhye
        \thanks{
        Corresponding author. Artificial Intelligence Center, Faculty of Electrical Engineering, Czech Technical University in Prague, Karlovo nám. 13, 120 00, Prague, Czech Republic. \\
        E-mail address:
        {\tt aradhadi@fel.cvut.cz}}
        \and
        Branislav Bo{\v s}ansk{\' y}
        \thanks{{\tt branislav.bosansky@agents.fel.cvut.cz}}
        \and
        Michael Hlav{\'a}{\v c}ek
        \thanks{{\tt michael@hlavacek.dev}.}
}

\maketitle

\begin{abstract}
Stackelberg equilibrium is a solution concept that describes optimal strategies to commit: 
Player~1 (\emph{the leader}) first commits to a strategy that is publicly announced,
then Player~2 (\emph{the follower}) plays a best response to the leader's commitment.
We study the problem of computing Stackelberg equilibria in sequential games with finite and indefinite horizons, when players can play history-dependent strategies. 
Using the alternate formulation called strategies with memory, we establish that strategy profiles with polynomial memory size can be described efficiently.
We prove that there exist a polynomial time algorithm which computes the Strong Stackelberg Equilibrium in sequential games defined on directed acyclic graphs, where the strategies depend only on the memory states from a set which is linear in the size of the graph. 
We extend this result to games on general directed graphs which may contain cycles. 
We also analyze the setting for approximate version of Strong Stackelberg Equilibrium in the games with chance nodes.
\end{abstract}

JEL Classification: C61; C73; D83

\emph{Keywords:} Sequential games, Strategies with memory, Commitment, Strong Stackelberg equilibrium


\section{Introduction}
The Stackelberg competition was introduced by von Stackelberg~\cite{von1934marktform} addressing an economic problem of duopoly. 
One player---\emph{the leader}---chooses a strategy to execute and publicly announces this strategy. The second player---\emph{the follower}---decides its own strategy only after observing the strategy of the first player. 
The leader must have commitment power (for instance a firm with monopoly in an industry) and cannot undo its publicly announced strategy, while the follower (for instance a new competitor) plays a best response to the leader's chosen strategy. 

The Stackelberg competition and the solution concept of a \emph{Stackelberg equilibrium} have been extensively studied in economics~(e.g., \cite{sherali1984multiple,AMIR19991,Matsumura2003}), in computer science~(e.g., \cite{von2004leadership,conitzer2006computing,letchford2010}), and have many important applications. One of the desired characteristics is that Stackelberg equilibrium \emph{prescribes} the leader a strategy to be executed, assuming the follower reacts to this strategy. This asymmetry arises in many real-world scenarios where the leader corresponds to a government or a defense agency that need to publicly issue and execute a policy (or a security protocol) which others can observe. 
Indeed, many of the successful applications of Stackelberg equilibrium strategies can be found in security domain~(see~\cite{tambe2011,sinha2018stackelberg} for an overview) or, for example, in wildlife protection~\cite{fang2016deploying,fang2017paws}.

We are interested in the computational questions of Stackelberg equilibrium and restrict to the two-player setting with one leader and one follower
\footnote{This setting has attained the most attention in the real-world applications. 
Moreover, computing a Stackelberg equilibrium with $2$ or more followers requires finding a specific Nash equilibrium in a general-sum game as a subproblem, which is already a computationally hard task~\cite{basilico2016methods}.}.
We focus on sequential (or dynamic) games that are played over time. 
In this setting, the leader commits to a randomized strategy in each possible decision point that can arise in the game and the follower plays a best response to this commitment.
Only few applied models consider strategic sequential interactions among the players directly~(e.g., in \cite{xu2018mitigating}), however, many currently modeled scenarios are essentially sequential -- police forces can commit to a security protocol describing not only their allocation to targets but also their strategy in case of an attack (or some other event) and results in a strategic response. 
Similarly, sequential models would allow rangers in national parks react to immediate observations made in the field.

In this paper, we study sequential games in which the state of the game is perfectly observable by the players. 
Games with determined finite horizon are played on a directed acyclic graph (DAG) and games without determined finite horizon are played on a general directed graph (DG). 
In such graphs, each node corresponds to a state of the game.
Each node is assigned to a player that decides which action (an edge) to take and the game transitions to another state.
Terminal states have assigned pair of utility values, one value for each player.
The goal of the leader is to find such a strategy to commit to that maximizes the expected outcome if the follower plays a best response.
Our solution concept is strong Stackelberg equilibrium (SSE), in which the follower breaks ties in favour of the leader. 
The complexity of computing equilibria in these games have been analyzed for many settings, however, most of the previous works is aimed at computing strategies which do not depend on past history (with the exception of works by Gupta et al.~\cite{gupta2015-conf}) -- i.e., the leader commits to a probability distribution over actions in each node assigned to her.

The determining result is due to Letchford and Conitzer~\cite{letchford2010}, showing a polynomial time algorithm to compute SSE in the sequential games with two players in the following settings: (1) The graphs are trees and players can play mixed strategies (2) The graphs are DAGs and players are restricted to play only pure strategies. 
The main idea in these algorithms is to compute the sets of the outcomes which the leader can guarantee, and these sets are used to construct commitment of the leader. 
They however, also show NP-hardness result of computing SSE in mixed strategies played on DAGs. 

The main driving factor of their hardness result is the following. In the directed acyclic graphs, there are large number paths arriving at a node. 
The leader needs to commit to one of many possible actions at that node, irrespective of the path which leads to this node.
If the number of such nodes is large (linear in the size of the graph), it might result in finding an optimal solution among the potentially exponentially many outcomes. 
This issue is exploited by \cite{letchfordthesis,letchford2010} to reduce the SAT problem to sequential games on DAGs with mixed strategies, thus proving the hardness result. 
Hence, the main issue here is the inability of the leader to make decisions irrespective of the path followed to reach a particular node. 

\subsection{Our contributions}

We address the natural question that arises from the above discussion.
Is it possible to compute the SSE in polynomial time for games on DAGs where players are allowed to use mixed strategies which can depend on history?  
Is there a way around to efficiently describe history-dependent strategies, since the possible number of histories is exponential in the size of the graph?
In this paper we positively answer both of these questions.

Since the players are rewarded utility only at the terminal nodes, the actual path taken to reach that node does not matter for the sake of computing the utility. 
However in games on DAGs and DGs, histories can still play an important role in determining the behaviour of the players, as they can be used as a virtual randomization device.
Using different histories to reach at the same node controlled by the follower, the leader can enforce certain behaviour from the follower in best response, and can also use the histories to punish the follower if the follower does not play accordingly.  
When players can use history-dependent mixed strategies, the solution of SSE differs from their history-independent counterpart. 
As the strategy space is richer, the leader is always (at least weakly) better off. Fig.\ref{dagexample} shows an example of a game in which strategy profile with memory gives a strictly better payoff for the leader when compared to the their history-independent counterpart.

In case of DAGs and DGs, there can be exponentially large number of histories. 
In such cases it is impossible to fully describe a general history-dependent strategy efficiently. 
However as we show that there are strategy profiles which can be described efficiently, as they uses only a partial information from the histories. 
To capture the idea that the players might not use the entire history space to make their decisions, we define a different formulation for history-dependent strategy profiles, called strategy profile with memory.
According this formulation, the strategy profiles are equipped with a set of public memory states. 
These memory states store a partial information about the past history.
The players can make decisions depending only on the current memory state, and not on the entire history.  
This formulation however does not restrict the set of history-dependent strategy profiles. 
If the set of memory states is equal to the set of all possible histories, then any arbitrary history-dependent strategy can be described as a strategy profile with memory, since both players are perfectly informed. 
On the other hand, if the set of memory states is small, particularly polynomial in size, then the strategies can be described efficiently.

The main result of this paper is the existence of polynomial time algorithm to compute SSE with memory (size of memory set is linear in  the size of graph) in DAGs. 
Firstly, the algorithm performs a dynamic programming style upward pass to constructs the commitment sets of the outcomes (probability distribution on terminal nodes) to which the leader can commit, that is, there is a commitment strategy for the leader and a corresponding best response from the follower such that the given outcome is reached. 
Also importantly, this upward pass also recursively stores the information which can be used later to compute the commitment for the leader.
Using this information, the algorithm then performs a downward pass, which constructs a commitment strategy for the leader. 
They crucial step in the downward pass is the construction of the memory states. 
The memory states act as a recommendation for the follower to preform certain actions so as to reach the outcome desired for the commitment of the leader.  
Whenever the follower takes an action other than the one suggested by the memory state, the game perpetually goes into 'red flag' memory state. 
On observing the red flag memory state, the leader plays punishing strategy which commits to the outcome with the lowest possible utility for the follower.
This acts as an incentive for the follower to act according the recommendations from the memory state. 

We extend our result to games on general DGs. 
Although the dynamic programming techniques can not be used directly due to the existence of directed cycles, we are able to modify our algorithm by decomposing the graph into strongly connected components (SCCs) and applying the dynamic programming on the set of SCCs.
The main driving factor here is that although the nodes may be visited multiple times in a SSE, which would make the computation difficult, we show an existence of a SSE in which any node is visited at most twice with positive probability.
This results allows us to use the upward pass twice to construct the commitment sets accurately, and the downward pass works similarly. 
When there are chance nodes, a reduction similar to ~\cite{letchford2010} shows that it is NP-hard to compute SSE with memory. 
\cite{bosansky2017} shows a polynomial time algorithm for the approximate version of SSE for history-independent strategies on trees.
With the strategies with memory formulation, we extend this result for DAGs.

\subsection{Related Work}
There is a rich body of literature studying the problem of computing Stackelberg equilibria. Our setting is closely related to the literature studying the Stackelberg equilibrium in the extensive form games. Letchford et al.\cite{letchford2010} showed the existence of polynomial time algorithm to compute the Stackelberg equilibrium in the extensive form games in the cases where either the graph was a tree or the players are restricted to play pure strategies. They also show the NP hardness result if there exist chance nodes or if the graph is DAG and strategies are mixed. 
Bo{\v s}ansk{\' y} et al.~\cite{bosansky2017} showed that allowing the leader to commit to correlated strategies can reduce computational complexity (e.g., computing Stackelberg equilibrium is NP-hard on games with chance, but polynomial when the leader commits to correlated strategies). They also show an FPTAS algorithm to compute the Stackelberg equilibrium in games with chance nodes.
Kroer et al.~\cite{kroer2018robust,kroer2020limited}, consider the Stackelberg setting in a limited lookahead variant of extensive form games.
In a related work for dynamic games, {\v C}ern{\' y} et al.~\cite{CernyEC20} proposed compact representation of strategies in the form of automata playing extensive-form games. In their work, the authors show that Stackelberg equilibrium can be computed in polynomial time when the complexity of the strategies is bounded. 

The main feature of our setting that separates our paper from the others is that in our setting, the players can base decision on the past history. To the best of our knowledge, no previous work analyzed the impact of allowing the leader to use memory in finite games played on directed graphs. 
There are some results for infinite stochastic games by Gupta et al.~\cite{gupta2015-conf} who considered Stackelberg equilibria with memory in games on directed graphs. However, their objective is to optimize discounted sum and they have negative results about finiteness of the memory needed. Later, these results were extended to mean payoff games~\cite{gupta2015}.

Our setting also relates to other class of Stackelberg games in a broader sense. The computational complexity of the problem is known for one-shot games~\cite{conitzer2006computing}, Bayesian games~\cite{conitzer2006computing}, and some infinite stochastic games~\cite{gupta2015,gupta2015-conf,letchford2012}.
Similarly, many practical algorithms are also known and typically based on solving multiple linear programs~\cite{conitzer2006computing}, 
or mixed-integer linear programs for Bayesian~\cite{Paruchuri2008} and extensive-form games~\cite{bosansky2015-aaai-sse,cermak2016-aaai}.
The Stackelberg equilibria are studied extensively in the various models of security games~\cite{Basilico2009,clempner2015stackelberg,durkota2019hardening,klaska2018,korzhyk2010complexity,nguyen2019tackling} 
Another variant of the Stackelberg notion is when the leader is allowed to commit to correlated strategies~\cite{Conitzer2011,letchford2012,xu2015}.

The structure of the paper is as follows. In section \ref{Sec:Prelim} we introduce the model and in the section \ref{Sec:Memory}, we define the concept of strategy profile with memory. In sections \ref{sec:algoDAG} and \ref{sec:algoDG} we discuss the polynomial time algorithm to compute the SSE in the games on DAGs and general DGs respectively. In section \ref{sec:algochanceDG}, we describe discuss the setting on directed acyclic graphs with chance nodes and in section \ref{sec:conclude}, we have concluding remarks.

\section{Sequential games}\label{Sec:Prelim}

We consider two-player general sum sequential games, both with finite and indefinite horizon. 

\begin{definition} \rm
A {\em two-player sequential game} is a tuple $\mathcal{G} = (G, v_0, \rho, \sigma_c, u)$ where
\begin{itemize}
\item $\set{1, 2}$ is a set of players, player $1$ is called the leader and player $2$ is called the follower,
\item $G = (V,E)$ is the directed graph on which the game is played, $V$ is the set of nodes and $E$ is the set of directed edges, 
\begin{itemize}
    \item $\nonleaf \subseteq V$ denotes the set of non-leaf nodes,
    \item $\leaf \subseteq V$ denotes the set of leaves,
    \item $\childset(v) \subseteq V$ denotes the set of children of node $v$, edges leading to the children nodes are also termed \emph{actions} at node $v$,
\end{itemize}
\item $v_0 \in V$ is the root node, where the game starts,
\item $\role : V \to \{1,2,c\}$ is a function which defines which player plays in the given node $s$, ($\role(v) = 1$ or $\role(v) = 2$), or whether the node is a chance node ($\role(v) = c$);
\begin{itemize}
    \item $V_i$ denotes the set of all states in which a player or the chance plays, $V_i = \set{v \in V \vert \role(v) = i}$
    \item An action of the players at any given node   corresponds to a child node they choose to move. Hence the action set at a node $v$ is same as the children set $C(v)$.
    $C_i = \bigcup_{v \in V_i} C(v)$ denotes the set of all actions of player $i$ or nature,
\end{itemize}
\item $\sigma_c: V_c \times V \to [0, 1]$ are the nature probabilities in chance nodes, $\sigma_c(v, w)$ denotes the probability with which the nature moves to $w$ starting at $v$. 
For each $v \in V_c$, $\sigma_c$ satisfies the following:  $\sum_{w \in C(v)} \sigma_c(v, w) = 1$
\item $u = (u_1, u_2)$ where $u_i : L \to \real$ is the utility function of each player
$i$.
\end{itemize}
\end{definition}

\textbf{Graph structure} The graph $G$ determines the horizon of the game.
The game graph is assumed to be a connected graph.
If the graph is a tree or a directed acyclic graph (DAG) then the game has finite horizon and if the graph has directed cycles then the game has infinite horizon. 
The graph does not contain self loops.
The \emph{Height} of the game graph which is a tree or a  DAG is the maximum distance from the root node to a leaf node. 
We use the term \emph{outcome} to denote the leaf node reached at the end of the play or a probability distribution on the set of leaf nodes reached if players mix their actions. 

\textbf{Histories} 
A {\em history} $h$ at a node $v$ is a sequence of nodes visited, formally $h = v_0,\ldots,v_k$ where $v_0$ is the root node, $v_k = v$ and for $0 \leq j \leq k-1$, $v_{j+1}$ is a child of $v_j$. 
For a given history $h$, let $v_h$ denote the final node in the sequence $h$.
For a node $v$, let $H_v$ denote the set of possible histories at $v$ and let $H = \cup_{v \in S}H_v$ be the set of all histories. Let $H_1 = \cup_{v \in S_1}H_v$ and $H_2 = \cup_{v \in S_2}H_v$ denote the set of histories for the leader and the follower respectively.
Any history of the form $v_0,\ldots,v_k,\ldots,v_l$ is a continuation of of history $v_0,\ldots,v_k$.
A \emph{play} $\pi$ is a sequence $v_0,\ldots,v_k$ where $v_0$ is the root node and $v_k$ is a leaf node, $\Pi$ is the set of all plays.

\textbf{Strategies and utilities} For $i=1,2$, a {\em history-dependent strategy} for player $i$ is defined as $\sigma_i : H_i \to \Delta(A)$ where 
$\sigma_i(h)$ denotes a probability measure over $A$ and
$\sigma_i(h, a)$ denotes a probability of an action $a \in A$ at history $h$. Each $\sigma_i$ satisfies the equality $\sum_{a \in A(v_h)} \sigma_i(h,a) = 1$.
Let $\Sigma_i$ denotes the set of history-dependent strategies of player $i$.
A {\em strategy profile} $\sigma = (\sigma_1, \sigma_2)$ is a tuple of strategies of the leader and the follower. 
Let $u_i(\sigma)$ denote the expected utility for player $i$ according to a strategy profile $\sigma$ and $u_i(\sigma)(h)$ denote the expected utility conditional on the history $h$ as been observed.
A play or a history has positive support in a strategy profile, if it is realized with positive probability.

\textbf{Strong Stackelberg equilibrium}
A strategy $\sigma_2$ is a {\em best response} to a strategy $\sigma_1$ if $u_2(\sigma_1, \sigma_2) \geq u_2(\sigma_1, \sigma'_2)$ for all $\sigma'_2 \in \Sigma_2$. 
Let ${\cal BR}(\sigma_{1})$ be the set of
all best responses to strategy $\sigma_1$.

\begin{definition}\label{def:SSE}
A strategy profile $\sigma$ is a {\em Strong Stackelberg equilibrium (SSE)} if
$$
\sigma = \argmax_{\sigma'_1 \in \Sigma_1, \sigma'_2 \in {\cal BR}(\sigma'_1)} u_1(\sigma'_1, \sigma'_2).
$$
\end{definition}

An important point to note here is that in a SSE, if the follower has more than one best response to the leader's strategy, he chooses a best response which maximizes the utility for the leader. 
This implies that, although there can be multiple SSEs, all of them provide the same utility for the leader. 

\vspace*{-5mm}

\begin{figure}
	\centering
	\begin{minipage}[t]{0.45\textwidth}
		\centering
		\begin{tikzpicture}
			[node distance=0.5cm]
			\tikzstyle{leader}=[circle,draw,inner sep=1.5]
  			\tikzstyle{leaf}=[]
		  	\tikzstyle{follower}=[rectangle,draw,inner sep=2]
  			\tikzstyle{nature}=[diamond,draw,inner sep=1.5]
  			\node(a) [follower]{$v_0$};
  			\node(b)[leader, below left=of a]{$v_2$};
  			\node(c)[leaf, below right=of a]{$(0, 1)$};
  			\node(d)[leaf, below left=of b]{$(0, 2)$};
  			\node(e)[leaf, below right=of b]{$(2, 0)$};
  			\draw (a) -- (b);
  			\draw (a) -- (c);
  			\draw (b) -- (d);
  			\draw (b) -- (e);
		\end{tikzpicture}
	\end{minipage}\hfill
	\begin{minipage}[t]{0.45\textwidth}
		\centering
		\begin{tikzpicture}
			[cross/.style={path picture={
  \draw[black]
(path picture bounding box.south east) -- (path picture bounding box.north west) (path picture bounding box.south west) -- (path picture bounding box.north east);
}}]
			\tikzstyle{dot}=[circle,draw,inner sep=1.5,fill=black]
  			\draw[->] (-0.5,0) -- (2, 0) node[right] {$u_2$};
		 	\draw[->] (0,-0.5) -- (0, 2) node[above] {$u_1$};
  			\node(a)[dot,label=left:{$(2,0)$}] at (0, 1.4) {};
  			\node(b)[dot,label=below:{$(0,2)$}] at (1.4, 0) {};
  			\node(c)[dot,label=below:{$(0,1)$}] at (0.7, 0) {};
			\node(i)[cross, inner sep=2, label=above right:{$(1,1)$}] at (0.7, 0.7) {};
  			\draw (a) -- (b);
  			\draw[draw=red,dashed] (c) -- (0.7, 1.6);
		\end{tikzpicture}
	\end{minipage}
	\caption{An example of an extensive form game. The leader plays in the circular node $v_2$, the follower
	in the rectangular $v_0$. To the right is the visualisation of possible outcomes in the space
	of utilities. The red dashed line represents the maxmin value $\mu_2(v_0)$
	of the follower's node $v_0$.\vspace{-0.3cm}}
	\label{stackelbergexample}
\end{figure}
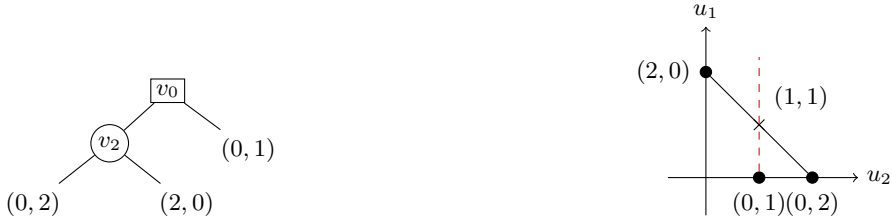

\begin{examp}\rm
To demonstrate SSE in sequential games, consider the game in Fig.~\ref{stackelbergexample}. In Nash equilibrium, the leader always
plays to $(2, 0)$ in $v_2$. Knowing this, the follower plays to $(0, 1)$ to get better utility.
In SSE, the leader has the option to commit to playing an even mixture between the leaves $(0, 2)$
and $(2, 0)$, offering both players the utility of $1$. Because the follower is assumed to break ties
in the leader's favor, she plays to $v_2$. Hence, the follower obtains better utility in the SSE.

In our results, we use a visualisation of outcomes of two player games due to \cite{letchford2010}.
We can identify a leaf node with
utilities $(u_1, u_2)$ with a point in a two dimensional space. A set of all mixtures
over two leaf nodes $z_1, z_2$ with utilities $(u_1^1, u_2^1)$ and $(u_1^2, u_2^2)$
is then a line segment connecting these two points. To illustrate this concept, refer
to Fig.~\ref{stackelbergexample}. The leader, who plays second,
has the option of mixing between two leafs with utilities $(2, 0)$ and $(0, 2)$,
hence the line segment. The vertical line positioned at $u_2 = 1$ represents the
follower's maxmin value -- the minimum utility the follower is willing to accept when playing Left.
\qed
\end{examp}

\subsection{Preliminary results} 

\subsubsection{}

\begin{definition}\rm
The {\em maxmin value at node $v$} for the follower is the worst-case utility the follower can guarantee for himself, irrespective of the leader's strategy. 
Formally, maxmin value at node $v$ is $\mu_2(v) = \max_{\sigma_2 \in \Sigma_2, h\in H_v}\min_{\sigma_1 \in \Sigma_1} u_2(\sigma_1, \sigma_2)(h)$. 
A {\em punishing strategy} is a memoryless strategy of the leader which for each $v \in V$ guarantees the payoff of at most $\mu_2(v)$ for the follower at node $v$. 
A punishing strategy $\sigma^p_1$ satisfies $u_2(\sigma^p_1, \sigma_2)(h) \le \mu_2(v)$ for each $\sigma_2 \in \Sigma_2$, $v \in V$ and $h\in H_v$ . 
\end{definition}

The maxmin value of the follower is completely independent of the leader's utility. While playing the punishing strategy, the leader acts as an adversary to the follower, disregarding their own utility. Maxmin value at any node is does not depend on through which history that node is reached. 

\begin{lemma}\label{lem:punishing}
For any sequential game on a graph which does not contain directed cycles, the leader has a punishing strategy.
\end{lemma}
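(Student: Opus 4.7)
The plan is to exhibit the punishing strategy explicitly by a bottom-up (reverse topological) construction on the DAG, and then verify by induction on the same ordering that it attains the bound for every node simultaneously. Because the graph is acyclic, the relation ``$w$ is reachable from $v$'' is a well-founded partial order on $V$, so we can process nodes from leaves toward the root, and the inductive definition of $\mu_2$ is itself consistent.

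First I would record the recursive characterization of $\mu_2$ on a DAG:
\begin{itemize}
\item if $v \in L$, $\mu_2(v) = u_2(v)$;
\item if $v \in V_1$, $\mu_2(v) = \min_{w \in C(v)} \mu_2(w)$ (the leader is free to move to the child that is worst for the follower);
\item if $v \in V_2$, $\mu_2(v) = \max_{w \in C(v)} \mu_2(w)$;
\item if $v \in V_c$, $\mu_2(v) = \sum_{w \in C(v)} \sigma_c(v,w)\,\mu_2(w)$.
\end{itemize}
This is well-defined on a DAG by induction in reverse topological order, and it agrees with the definition $\mu_2(v) = \max_{\sigma_2}\min_{\sigma_1} u_2(\sigma_1,\sigma_2)(h)$ by the usual subgame argument (no history-dependence is needed because $u_2$ is determined by the continuation after~$v$).

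Next I would define the candidate punishing strategy $\sigma^p_1$, memoryless by construction: at every $v \in V_1$ let $\sigma^p_1$ place probability one on some fixed child $w^*(v) \in \argmin_{w \in C(v)} \mu_2(w)$. I would then prove by induction on the reverse topological order that for every $v \in V$, every history $h \in H_v$, and every follower strategy $\sigma_2 \in \Sigma_2$,
\[
u_2(\sigma^p_1,\sigma_2)(h) \,\le\, \mu_2(v).
\]
The base case $v \in L$ is immediate. For the inductive step I would split on $\rho(v)$: at a leader's node, $\sigma^p_1$ deterministically moves to $w^*(v)$, so the expected utility equals $u_2(\sigma^p_1,\sigma_2)(h\cdot w^*(v)) \le \mu_2(w^*(v)) = \mu_2(v)$; at a follower's node, the expected utility is a convex combination $\sum_{w \in C(v)} \sigma_2(h,w)\, u_2(\sigma^p_1,\sigma_2)(h\cdot w)$, each term bounded by $\mu_2(w) \le \max_{w} \mu_2(w) = \mu_2(v)$; at a chance node, the weights $\sigma_c(v,\cdot)$ match the recursion for $\mu_2$ exactly.

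The step that requires the most care is checking that the memoryless choice $w^*(v)$ really suffices in spite of the fact that $h$ ranges over all histories reaching $v$. The key point is that for a memoryless $\sigma^p_1$ the quantity $u_2(\sigma^p_1,\sigma_2)(h)$ depends on $h$ only through its endpoint $v$ and through the follower's reaction to continuations of $h$; after replacing $\sigma_2$ by its restriction to continuations of $h$, the follower is playing an arbitrary strategy in the subgame rooted at $v$, which is exactly the setting covered by the induction hypothesis. Acyclicity is essential here: it both makes the recursion for $\mu_2$ well-founded and guarantees that play started at $v$ terminates, so the expectations above are finite and well defined. This will be precisely what breaks in the general directed graph case addressed later in the paper.
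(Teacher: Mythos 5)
Your proof follows essentially the same route as the paper's: a reverse-topological (backward) construction of $\mu_2$ together with the memoryless leader strategy that always moves to an $\argmin$ child, which is exactly the paper's argument. Your write-up is in fact slightly more complete than the paper's own proof, since you also treat chance nodes and explicitly verify by induction that the bound $u_2(\sigma^p_1,\sigma_2)(h) \le \mu_2(v)$ holds uniformly over all histories $h \in H_v$, steps the paper leaves implicit.
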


\begin{proof}
We prove the Lemma by constructing the punishing strategy for the leader in the backward fashion. During this backward procedure, we also keep track of the maxmin value for the follower at each node $v$.

For a leaf node $v$, clearly $\mu_2(v)$ is equal to the payoff the follower obtains at the leaf $v$. Consider a node $v$ which is not a leaf. 
Assume that $\mu_2(w)$ has been calculated for each child $w$ of $v$. 
If $v$ is a leader node, then leader can choose the action which hurts the follower the most, hence $\mu_2(v) = \min_{w \text{ child of } v } \mu_2(w)$ and the leader moves to the node $arg\min_{w \text{ child of } v } \mu_2(w)$ with probability 1.
If $v$ is a follower node, then follower can minimize his punishment by moving to a child with largest maxmin value, hence $\mu_2(v) = \max_{w \text{ child of } v } \mu_2(w)$. 
\qed
\end{proof}

\begin{lemma}\label{lemm:follower-pure}
For any SSE $(\sigma_1, \sigma_2)$, there exists a SSE $(\sigma_1,\hat{\sigma_2})$ where $\hat{\sigma_2}$ is a pure strategy.
\end{lemma}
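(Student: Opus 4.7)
The plan is to fix the leader's strategy $\sigma_1$ coming from the given SSE and treat the follower's best-response problem as a single-agent optimization on the acyclic game graph, where a pure optimal policy can be constructed by backward induction. The twist is to lexicographically break ties in favour of the leader, so that the resulting pure policy $\hat{\sigma_2}$ also attains the SSE leader utility $u_1(\sigma_1, \sigma_2)$.

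Concretely, I would define two value functions on histories $h \in H$ by backward induction. At a leaf history, set $V_F^{\ast}(h) = u_2(v_h)$ and $V_L^{\ast}(h) = u_1(v_h)$. At leader and chance histories, take the expectation of successor values under $\sigma_1(h)$ and $\sigma_c(v_h, \cdot)$ respectively. At a follower history, set $V_F^{\ast}(h) = \max_{w \in C(v_h)} V_F^{\ast}(h \cdot w)$ and then $V_L^{\ast}(h) = \max_{w : V_F^{\ast}(h \cdot w) = V_F^{\ast}(h)} V_L^{\ast}(h \cdot w)$. Since the graph is acyclic (as in Lemma~\ref{lem:punishing}), the recursion is well-defined. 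I would then define $\hat{\sigma_2}$ as the pure strategy that at each follower history picks any child achieving the second $\arg\max$. A short induction shows $u_2(\sigma_1, \hat{\sigma_2})(h) = V_F^{\ast}(h)$ at every reachable $h$, so $\hat{\sigma_2} \in \mathcal{BR}(\sigma_1)$, and analogously $u_1(\sigma_1, \hat{\sigma_2}) = V_L^{\ast}(v_0)$.

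It remains to show $u_1(\sigma_1, \sigma_2) = V_L^{\ast}(v_0)$. The direction $u_1(\sigma_1, \sigma_2) \ge V_L^{\ast}(v_0)$ follows immediately from $\hat{\sigma_2}$ being a best response together with the $\arg\max$ in Definition~\ref{def:SSE}. For the reverse, I would argue by induction on the height of $v_h$ that at every follower history $h$ reached with positive probability under $(\sigma_1, \sigma_2)$, the support of $\sigma_2(h)$ is contained in $\arg\max_{w : V_F^{\ast}(h \cdot w) = V_F^{\ast}(h)} V_L^{\ast}(h \cdot w)$; otherwise, shifting probability mass from a tied-but-worse child to a better one, and replacing the continuation on the new branch by $\hat{\sigma_2}$, yields a best response strictly better for the leader, contradicting SSE optimality. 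I expect the main obstacle to be the careful bookkeeping of this last exchange argument in the history-dependent setting: rerouting probability at one history changes which downstream histories are reached with positive probability, so the induction hypothesis has to be phrased at the level of expected continuation utilities $V_F^{\ast}$ and $V_L^{\ast}$ rather than at the level of the full joint distribution over plays, and one has to verify that the local swap preserves the best-response property at all newly reachable histories, which is exactly what the construction of $\hat{\sigma_2}$ guarantees.
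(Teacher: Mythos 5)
Your construction takes a genuinely different route from the paper's. The paper performs local surgery on $\sigma_2$: at any positive-probability history $h$ where $\sigma_2$ mixes over children $w_1,\ldots,w_k$, it first argues that $u_2(\sigma_1,\sigma_2)(hw_i)$ is constant across the support (otherwise the follower shifts weight to the better child, and since both children are on-path the fixed $\sigma_1$ cannot detect or punish the shift, contradicting best response), and then replaces the mixture by the single child maximizing $u_1(\sigma_1,\sigma_2)(hw_i)$, which can only raise the leader's utility, so the profile remains an SSE. Your global value functions $V_F^{\ast}$ and $V_L^{\ast}$ encode the same two facts (one-step optimality of the follower, leader-favourable tie-breaking), and your plan is sound on DAGs; in fact it can be streamlined: once you know that at every on-path follower history the support of $\sigma_2(h)$ lies in $\arg\max_{w} V_F^{\ast}(h \cdot w)$, a direct backward induction yields $u_1(\sigma_1,\sigma_2)(h) \le V_L^{\ast}(h)$ pointwise, so the exchange-and-contradiction step in your last paragraph (the part you flag as the main obstacle) is dispensable --- combine $u_1(\sigma_1,\sigma_2) \le V_L^{\ast}(v_0) = u_1(\sigma_1,\hat{\sigma}_2)$ with the maximality in Definition~\ref{def:SSE} and you are done.

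There is, however, a genuine gap in coverage. Your recursion is explicitly premised on acyclicity (``Since the graph is acyclic \ldots the recursion is well-defined''), but the lemma is stated for the paper's general model, which includes games on directed graphs with cycles and indefinite horizon, and it is invoked in exactly that setting: the proof of Proposition~\ref{prop:twotimes} assumes $\sigma_2$ pure ``without loss of generality due to Lemma~\ref{lemm:follower-pure}.'' With directed cycles your backward induction has no base case: histories are unbounded, $V_F^{\ast}$ would have to be defined as a supremum over continuation strategies rather than by recursion, and neither that supremum nor your lexicographic $\arg\max$ need be attained child-by-child, so $\hat{\sigma}_2$ is not well-defined by your recipe (contrast Lemma~\ref{lem:punishing}, which the paper deliberately states only for graphs without directed cycles). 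The paper's pointwise-modification argument sidesteps this entirely because it never solves an optimization problem: it only manipulates the conditional utilities of the given profile $(\sigma_1,\sigma_2)$, history by history. To prove the lemma in the generality in which the paper uses it, you would need either to extend your value functions to the indefinite-horizon setting (e.g., a fixed-point definition of the follower's best-response values together with an argument that the lexicographic selection is attained on the support of the given SSE) or to fall back on the paper's local surgery.
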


\begin{proof}
We will construct $\hat{\sigma_2}$ by modifying $\sigma_2$ at each history where $\sigma_2$ randomizes between more than one actions. 
Consider an arbitrary such history $h$, assume that at $h$, $\sigma_2$ plays actions $w_1,\ldots,w_k$ with positive probability. 
First we claim that $u_2(\sigma_1,\sigma_2)(hw_i)=u_2(\sigma_1,\sigma_2)(hw_j)$ for each $i,j=1,\ldots,k$. 
Assume the contrary, that for some $i,j$,  $u_2(\sigma_1,\sigma_2)(hw_i)>u_2(\sigma_1,\sigma_2)(hw_j)$. 
Then follower can obtain a strictly better payoff if he deviates from $\sigma_2$ by transferring all the weight of action $w_j$ to action $w_i$. 
Note that, in the profile $(\sigma_1,\sigma_2)$ the follower plays both $w_i$ and $w_j$ with positive probability, so the leader can not distinguish the strategy $\sigma_2$ from its deviation just by observing the history and thus can not punish the deviation. 
This contradicts that $\sigma_2$ is a best response to $\sigma_2$. So, we must have $u_2(\sigma_1,\sigma_2)(hw_i)=u_2(\sigma_1,\sigma_2)(hw_j)$ for each $i,j=1,\ldots,k$. 
Let $i_0 = \argmax_i\{u_1(\sigma_1,\sigma_2)(hw_i)\}$ and let $\hat{\sigma_2}$ plays $i_0$ with probability 1 at $h$.
Since $u_1(\sigma_1,\hat{\sigma_2})(h) \ge u_1(\sigma_1,\sigma_2)(h)$, $(\sigma_1,\hat{\sigma_2})$ is also a SSE. 
By performing the modification of $\sigma_2$ at every history, we obtain a pure strategy $\hat{\sigma_2}$ such that $(\sigma_1,\hat{\sigma_2})$ is a SSE.
\qed
\end{proof}

\begin{lemma}\label{lemm:onlytwo}
Consider an outcome in which payoff vectors $(x_1,y_1),\ldots,(x_k,y_k)$ are realized with probabilities $p_1,\ldots,p_k$. Then there exists 2 vectors, WLG say $(x_1,y_1), (x_2,y_2)$ and a probability $\hat{p}$, such that the outcome in which $(x_1,y_1)$ is realized with probability $\hat{p}$ and $(x_2,y_2)$ is realized with probability $(p_1+\ldots+p_k) - \hat{p}$ is weakly preferred by both the players.       
\end{lemma}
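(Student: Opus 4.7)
The plan is to reduce the statement to a clean convex-geometry fact in the plane. Let $P = p_1 + \ldots + p_k$ and let $(\bar x, \bar y) = \frac{1}{P}(\sum_i p_i x_i, \sum_i p_i y_i)$ be the expected payoff per unit mass of the given outcome. View each $(x_i, y_i)$ as a point in $\real^2$; then $(\bar x, \bar y)$ lies in $C := \mathrm{conv}\{(x_1,y_1), \ldots, (x_k,y_k)\}$. The goal becomes: find a point $(x^*, y^*) \in C$ with $x^* \ge \bar x$ and $y^* \ge \bar y$ that is a convex combination of at most two of the original points.

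To produce $(x^*, y^*)$, I would shoot the ray $\{(\bar x + t, \bar y + t) : t \ge 0\}$ from $(\bar x, \bar y)$ in the direction $(1,1)$. Since $C$ is compact, the intersection of this ray with $C$ is a bounded segment; let $(x^*, y^*)$ be its far endpoint. By construction $x^* \ge \bar x$ and $y^* \ge \bar y$, and $(x^*, y^*)$ lies on the boundary of $C$, in fact on its upper-right (Pareto) portion. Because $C$ is the convex hull of finitely many points in $\real^2$, its boundary consists of edges joining consecutive extreme points, and those extreme points form a subset of $\{(x_i, y_i)\}$. Hence $(x^*, y^*)$ lies on a single edge (or a single vertex), giving indices which we relabel as $1, 2$ and a scalar $\lambda \in [0,1]$ with $(x^*, y^*) = \lambda(x_1, y_1) + (1-\lambda)(x_2, y_2)$.

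Finally, set $\hat p := \lambda P$, so the new outcome realises $(x_1, y_1)$ with probability $\hat p$ and $(x_2, y_2)$ with probability $P - \hat p$, matching the total mass. Each player's expected utility under the new outcome equals $P \cdot x^*$ or $P \cdot y^*$ respectively, which is at least $P \bar x = \sum_i p_i x_i$ and $P \bar y = \sum_i p_i y_i$, the expected utility under the original outcome. Thus both players weakly prefer the new outcome.

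The main care-point, rather than an obstacle, is ensuring that the ray actually exits $C$ at a Pareto-boundary point expressible as a mixture of \emph{two} of the given points (not more). This is immediate once one recalls that convex hulls of finitely many planar points are polygons whose extreme points are drawn from the original set, so every boundary point admits a $2$-point representation. All other steps are routine.
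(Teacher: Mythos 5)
Your proof is correct and takes essentially the same route as the paper: embed the expected payoff as a point in the convex hull of the payoff vectors, push it to the hull's boundary, and use the fact that any boundary point of a planar polytope is a mixture of at most two of the original points. The only cosmetic difference is the ray direction --- you shoot along $(1,1)$, weakly improving both coordinates, whereas the paper moves horizontally, holding the leader's payoff exactly fixed while improving the follower's --- and you spell out the two-point boundary representation and the rescaling $\hat{p} = \lambda P$ that the paper's terser proof leaves implicit.
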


\begin{proof}
Let $p = p_1+\ldots+p_k$. 
We view the payoff vectors as points in 2-dimensional plane.
The point $A = p_1(x_1,y_1)+\ldots+p_k(x_k,y_k)$ is a convex combination of the points $p(x_1,y_1),\ldots,p(x_k,y_k)$, hence lies in the interior or on the boundary of their convex hull. 
So there exists a point $B$ on the boundary of the convex hull with same payoff for the leader and (possibly weakly) better payoff for the follower than $A$.

\begin{figure}
	\centering
	\begin{tikzpicture}
			[cross/.style={path picture={
  \draw[black];}}]
			\tikzstyle{dot}=[circle,draw,inner sep=1.5,fill=black]
  			\draw[->] (0,0) -- (3, 0) node[right] {$u_2$};
		 	\draw[->] (0,0) -- (0, 3) node[above] {$u_1$};
  			\node(a)[dot] at (0.6, 0.8) {};
  			\node(b)[dot] at (1.5, 0) {};
  			\node(c)[dot,label=below:{$A$}] at (2, 1.5) {};
  			\node(d)[dot] at (3, 1) {};
  			\node(e)[dot] at (1.5, 3) {};
  			\node(f)[dot] at (0.5, 2) {};
  			\node(g)[dot,label=below:{$B$}] at (2.6, 1.5) {};
  			\draw (a) -- (b);
  			\draw (a) -- (f);
  			\draw (e) -- (f);
  			\draw (e) -- (d);
  			\draw (d) -- (b);
  			\draw[draw=red,dashed] (c) -- (g);
\end{tikzpicture}
\end{figure}
\qed
\end{proof}

\begin{prop}
In any SSE with history-dependent strategies, the leader obtains (weakly) better payoff than any SSE with history-independent strategies.
\end{prop}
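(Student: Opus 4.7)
The plan is to prove this by exploiting the fact that the history-independent strategy space embeds into the history-dependent one, and then checking that the follower's best-response value and the tie-breaking do not degrade for the leader when moving to the richer strategy space.

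First I would set up notation. Let $\Sigma_i^{HI}$ denote the set of strategies that depend only on the current node (equivalently, history-dependent strategies $\sigma_i$ satisfying $\sigma_i(h) = \sigma_i(h')$ whenever $v_h = v_{h'}$), and let $\Sigma_i^{HD} = \Sigma_i$ as defined in the paper. Write $U^{HI}$ and $U^{HD}$ for the leader's SSE value in the two settings, and $\mathcal{BR}^{HI}(\sigma_1), \mathcal{BR}^{HD}(\sigma_1)$ for the corresponding best-response sets. The natural inclusion $\Sigma_i^{HI} \subseteq \Sigma_i^{HD}$ is immediate by regarding a node-indexed mixed action as a constant function of the history leading to that node.

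The key step is to verify the identity
\begin{equation}
\max_{\sigma_2 \in \Sigma_2^{HD}} u_2(\sigma_1,\sigma_2) \;=\; \max_{\sigma_2 \in \Sigma_2^{HI}} u_2(\sigma_1,\sigma_2) \qquad \text{for every } \sigma_1 \in \Sigma_1^{HI}.
\end{equation}
When $\sigma_1$ is history-independent, fixing it reduces the follower's decision problem to a (finite-state, finite-action, perfect-information) Markov decision process whose transitions and rewards depend only on the current node. A standard MDP argument then guarantees that some Markov (node-dependent) deterministic policy attains the optimum; combining this with Lemma~\ref{lemm:follower-pure} for the purity part, the optimal value is already achieved inside $\Sigma_2^{HI}$. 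Consequently, every $\sigma_2 \in \mathcal{BR}^{HI}(\sigma_1)$ also belongs to $\mathcal{BR}^{HD}(\sigma_1)$ when viewed under the inclusion $\Sigma_2^{HI} \hookrightarrow \Sigma_2^{HD}$, giving $\mathcal{BR}^{HI}(\sigma_1) \subseteq \mathcal{BR}^{HD}(\sigma_1)$.

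With these two facts the conclusion follows from a short chain of inequalities:
\begin{align}
U^{HD} &= \max_{\sigma_1 \in \Sigma_1^{HD}}\,\max_{\sigma_2 \in \mathcal{BR}^{HD}(\sigma_1)} u_1(\sigma_1,\sigma_2) \\
&\ge \max_{\sigma_1 \in \Sigma_1^{HI}}\,\max_{\sigma_2 \in \mathcal{BR}^{HD}(\sigma_1)} u_1(\sigma_1,\sigma_2) \\
&\ge \max_{\sigma_1 \in \Sigma_1^{HI}}\,\max_{\sigma_2 \in \mathcal{BR}^{HI}(\sigma_1)} u_1(\sigma_1,\sigma_2) \;=\; U^{HI},
\end{align}
where the first inequality uses $\Sigma_1^{HI} \subseteq \Sigma_1^{HD}$ and the second uses the subset relation on best-response sets established above (which makes the inner maximum weakly larger because it is taken over a larger feasible set, and tie-breaking is in the leader's favor).

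The main obstacle I anticipate is justifying the MDP reduction cleanly when the graph contains directed cycles, since then plays can be infinite and the follower's expected utility is only defined once one specifies behavior on non-terminating plays. I would handle DAGs first (where the reduction is entirely finite-horizon and the Markov-optimality argument is one line of backward induction), and then argue the DG case by noting that any history-independent leader commitment induces a finite MDP on the node set for which stationary optimal policies still exist; the example in Fig.~\ref{dagexample} then certifies that the inequality can be strict, so "weakly" is the best one can hope for in general.
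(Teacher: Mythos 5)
Your proposal is correct, and it is in fact more than the paper offers: the paper states this proposition \emph{without any proof}, relying on the informal intuition that ``a richer strategy space can only help the leader'' and pointing to Section~\ref{Sec:Memory} only for strictness of the inequality. Your argument supplies exactly the step that this intuition quietly skips: enlarging the strategy spaces enlarges the \emph{follower's} options too, and in a Stackelberg comparison that could in principle hurt the leader, so the naive inclusion $\Sigma_1^{HI}\subseteq\Sigma_1^{HD}$ alone does not close the argument. Your key identity --- that against a fixed history-independent commitment the follower faces a finite MDP on the node set, hence has a Markov (node-dependent, deterministic) optimal policy, giving $\mathcal{BR}^{HI}(\sigma_1)\subseteq\mathcal{BR}^{HD}(\sigma_1)$ --- is precisely what makes the chain of inequalities valid, with the last step using that SSE tie-breaking maximizes over the best-response set, so a superset can only raise the inner maximum. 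Two minor remarks. First, the appeal to Lemma~\ref{lemm:follower-pure} is unnecessary and slightly off-target: that lemma purifies the follower's strategy within a given SSE profile, whereas what you need is Markov optimality in the induced MDP, which backward induction (DAGs) or standard total-reward MDP results (DGs, with the paper's convention that infinite play yields zero and rewards accrue only at absorbing leaves) already deliver in pure form. Second, on general DGs the existence of the various maxima (i.e., of SSE itself) is nowhere established in the paper; your inequalities should be read as comparing the values whose attainment the proposition's statement presupposes, or as a comparison of suprema. With those caveats your proof is complete and, unlike the paper, makes the tie-breaking and best-response-set issues explicit.
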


\vspace{-0.3cm}
In Section \ref{Sec:Memory} we show an example of a game in which the leader obtains strictly better payoff in SSE with history-dependent strategies, than SSE with history-independent strategies.

\vspace{-0.3cm}
\section{Strategy profiles with Memory}\label{Sec:Memory}

When the graph is not a tree, the number of different histories can be exponentially large. 
So the history-dependent strategies can not be described efficiently.
In this section, we define a different formulation for history-dependant strategies profiles, called \emph{Strategy profiles with Memory}. 
In this formulation, the profiles are equipped with a set of memory states which store partial information about the past histories. 
Players do not make decisions according to a complete history, but only according to a memory state. 
We assume that the memory is public, that is, both the leader and the follower observe the same memory. 


\begin{definition}A {\em strategy profile with memory} is a tuple $(\sigma, M, \memupdate, m_0)$, where
\begin{itemize}
\item $M$ is a set of memory states
\item $\sigma = (\sigma_1, \sigma_2)$ are the strategies of players with $\sigma_i: M \times V_i \times V \to [0, 1]$ where $\sigma_i(m, v, w)$ denotes the probability that player $i$ at node $v \in V_i$ moves to a node $w \in C(v)$.
For each $v \in V_i$, $\sigma_i$ satisfies the following:  $\sum_{w \in C(v)} \sigma_c(m, v, w) = 1$.
\item $\memupdate : M \times V \times V \to M$ is the memory update function,
where $\memupdate(m, v, w)$ is the updated memory state if a player or nature moves to node $w$ from node $v$ when the current memory state is $m$
\item $m_0 \in M$ is the initial memory state.
\end{itemize}
\end{definition}


If the set of memory states $M$ is equal to the set of all possible histories $H$, then a general history-dependant strategy profile can be described using this formulation.   
However, when the set $M$ is at most polynomial in size, the strategy profiles can be described efficiently. 
A strategy profile with history-independent strategies can be described by choosing $M$ to be a singleton set.

\subsection{Memory as a randomization device}

In this subsection we discuss the use with memory as a randomization device to obtain better expected payoff for the leader.
We explain this concept using a simple example which captures the main idea.    

\vspace*{-0.5cm}
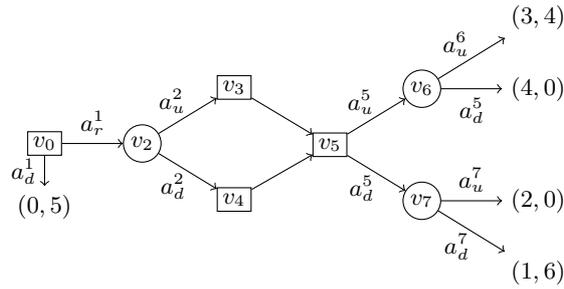
\begin{figure}
\centering
\begin{tikzpicture}
		[node distance=0.4cm and 0.8cm, label distance=-8pt]
	  \tikzstyle{leader}=[circle,draw,inner sep=1.5]
   	\tikzstyle{leaf}=[]
 	  \tikzstyle{follower}=[rectangle,draw,inner sep=2]
  	\tikzstyle{nature}=[diamond,draw,inner sep=1.5]
  	\node(a)[follower]{$v_0$};
  	\node(b)[leader, right=of a]{$v_2$};
  	\node(c)[leaf, below=of a]{$(0,5)$};
  	\node(d)[follower, above right=of b]{$v_3$};
  	\node(e)[follower, below right=of b]{$v_4$};
  	\node(g)[follower, above right=of e]{$v_5$};
  	\node(i)[leader, above right=of g]{$v_6$};
  	\node(j)[leader, below right=of g]{$v_7$};
  	\node(k)[leaf, right=of i]{$(4, 0)$};
  	\node(l)[leaf, above=of k]{$(3, 4)$};
  	\node(m)[leaf, right=of j]{$(2, 0)$};
  	\node(n)[leaf, below=of m]{$(1, 6)$};
  	\draw[->] 
		  (b) edge node[label=above left:{$a^2_u$}]{} (d)
		  (b) edge node[label=below left:{$a^2_d$}]{} (e)
          (d) edge (g) 
          (e) edge (g)
		  (g) edge node[label=above left:{$a^5_u$}]{} (i)
		  (g) edge node[label=below left:{$a^5_d$}]{} (j)
		  (i) -- node[midway,sloped,below]{$a^5_d$} (k)
		  (i) edge node[label=above left:{$a^6_u$}]{} (l)
		  (j) edge node[label=below left:{$a^7_d$}]{} (n);
		  \draw[->] (j) -- node[midway,sloped,above]{$a^7_u$} (m);
		  \draw[->] (a) -- (b) node[midway,sloped,above]{$a^1_r$};
		  \draw[->] (a) -- (c) node[midway,left]{$a^1_d$};
\end{tikzpicture}
\caption{An example of a game on a DAG. The leader plays in circular nodes,
the follower plays in square nodes.}
\label{dagexample}
\end{figure}

\begin{examp}\label{Ex:memory} \rm
Consider the game described by the graph in Fig.\ref{dagexample}, where $v_0$ is the root node. 
The SSE with history-dependant strategies can be described as follows. 
The leader wants to make the follower play $a_r^1$ in $v_0$, in order to avoid obtaining utility $0$. 
So at node $v_2$, the leader must commit to a randomized strategy in order to offer the follower utility at least $5$. 
If the leaf nodes $(3, 4)$ and $(1, 6)$ both are reached with equal probabilities, then the follower is guaranteed the expected payoff $5$ and the leader gets expected payoff $2$. 
In order to do so, the leader plays the actions $a_u^2$ and $a_d^2$ with equal probabilities at node $v_2$ and conditions his strategy in $v_6$ and $v_7$ on whether $a^2_u$ or $a^2_d$ was played. 
If $a^2_u$ was played, the leader wants to reach the leaf $(3, 4)$, and so hopes that the follower plays $a_u^5$ at $v_5$ and if $a^2_d$ was played, the leader wants to reach the leaf $(1, 6)$, and so hopes that the follower plays $a_d^5$ at $v_5$. 
He incentivizes the follower to play $a_u^5$ at $v_5$ on observing history $a^2_l$ at $v_2$, by committing $a_u^6$ at $v_6$ and $a_u^7$ at $v_7$, (so the follower is better off playing  $a_u^5$ at $v_5$, as the follower is punished by with utility 0 if he plays $a^5_r$).
Symmetrically, if $a^2_r$ was played, the leader commits to playing to $(1, 6)$ in $v_7$ and to $(4, 0)$ in $v_6$.

The memory states  for this SSE can be described as follows. 
The set $M$ has 3 memory states, $M = \set{m_0, m_l, m_r}$ where $m_0$ is initial memory state. 

The above analysis relies on the key fact that both the players can play depending on the past history. Otherwise for instance, if the leader's strategy is history-independent, he would not be able to punish the follower. And so in the best response, the follower plays $a_d^7$ irrespective of history to obtain himself the utility 6 and the leader gets the utility 1. Hence the leader obtains strictly better payoff in SSE with history-dependent strategies, than SSE with history-independent strategies.
\end{examp}

\section{Algorithm for Games without Chance Nodes on a DAG}\label{sec:algoDAG}

In this section, we consider the games on DAGs without chance nodes. 
As there are no directed cycles, the game has finite horizon. 
The main result is the construction of a polynomial time algorithm that computes a SSE where the players use strategy profile with memory. 
We show that the size of memory set is linear, thus it can be described very efficiently. 

\begin{theorem}\label{thm:DAG}
In games on DAGs without chance nodes, there exists a SSE with strategy profile with memory, where the size of memory set is linear in the number of nodes. There exists a polynomial time algorithm which computes such a SSE. 
\end{theorem}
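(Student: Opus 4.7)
I would formalize the two-pass dynamic-programming idea previewed in the introduction, using Lemma \ref{lemm:onlytwo} throughout to keep every data structure of polynomial size. To each node $v$ I would attach a \emph{commitment set} $\Lambda(v) \subseteq \mathbb{R}^2$ consisting of all payoff pairs $(x, y) = (u_1, u_2)$ that the leader can induce from $v$ via some memory strategy, assuming the follower best-responds in the subgraph rooted at $v$. By Lemma \ref{lemm:onlytwo} only the upper-Pareto frontier of $\Lambda(v)$ ever matters, and this frontier is determined by finitely many extreme points, each of which I store alongside a witness describing how to realize it.

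\textbf{Upward pass.} Process nodes in reverse topological order. For a leaf $v$, put $\Lambda(v) = \{(u_1(v), u_2(v))\}$. For a leader node $v$, the leader may freely mix over children, so $\Lambda(v)$ is the Pareto frontier of $\mathrm{conv}\bigl(\bigcup_{w \in C(v)} \Lambda(w)\bigr)$, and by Lemma \ref{lemm:onlytwo} every such point can be realized by mixing between at most two child-outcomes. For a follower node $v$ I use Lemma \ref{lem:punishing}: the leader can threaten $\mu_2(w')$ on every alternative $w'$, so a point $(x, y) \in \Lambda(w)$ is enforceable only if $y \geq \theta(v, w) := \max_{w' \in C(v) \setminus \{w\}} \mu_2(w')$. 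Hence $\Lambda(v)$ is the Pareto frontier of $\mathrm{conv}\bigl(\bigcup_{w \in C(v)} \{(x, y) \in \Lambda(w) : y \geq \theta(v, w)\}\bigr)$. For every extreme point I record which child (or which pair of children with which mixing weights) realizes it, together with the realizing sub-extreme point inside each chosen $\Lambda(w)$; this chain of witnesses is what the downward pass will consume.

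\textbf{Downward pass and memory.} Start at $v_0$ and pick $(x^*, y^*) \in \Lambda(v_0)$ maximizing $x$. By Lemma \ref{lemm:onlytwo} it decomposes as $\hat p \cdot e_1 + (1 - \hat p) \cdot e_2$ with $e_1, e_2$ extreme. I create one memory state per (node, extreme-recommendation) pair encountered in the recursive unfolding of $e_1$ and $e_2$, plus a single ``red flag'' state $m_\bot$. At a leader node reached while targeting recommendation $e$, the leader plays the action (or binary mixture) prescribed by the witness for $e$ and the memory is updated to the child's sub-recommendation. At a follower node reached while targeting $e$ with recommended child $w$, the induction guarantees that the leader's downstream play delivers the promised follower-payoff $y \geq \theta(v, w)$, so $w$ is a best response; any deviation sends memory to $m_\bot$, and from $m_\bot$ the leader henceforth executes the punishing strategy of Lemma \ref{lem:punishing}. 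Since each node hosts at most two recommendations, $|M| \leq 2|V| + 1 = O(|V|)$.

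\textbf{Main obstacle.} The subtle part is correctness. Soundness (that the output really is a SSE) requires verifying that from $m_\bot$ the follower cannot exceed $\mu_2(v)$, and that at any tie in the follower's best response along the recommended path the memory-encoded recommendation is precisely the tie-breaking choice singled out by Definition \ref{def:SSE}. Completeness (that $(x^*, y^*)$ is indeed the optimal SSE value) requires the inductive claim that for every memory strategy of the leader the induced follower-best-response payoff at $v$ lies in $\Lambda(v)$; this is where Lemma \ref{lemm:follower-pure} is used to restrict attention to pure follower responses, and Lemma \ref{lemm:onlytwo} is used again to collapse the local mixture into two extreme outcomes. Polynomial running time then follows because $\sum_v |\Lambda(v)|_{\mathrm{ext}}$ is bounded by the number of leaves below each node summed over $v$, giving $O(|V|^2)$ total work across both passes.
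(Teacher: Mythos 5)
Your proposal is correct and follows essentially the same route as the paper: an upward pass in reverse topological order building convex commitment sets with witnesses (your threshold $\theta(v,w)=\max_{w'\neq w}\mu_2(w')$ coincides with the paper's $i(w)$, and your filter-then-convex-hull step at follower nodes implicitly yields the paper's boundary mixtures $\alpha p_1+(1-\alpha)p_2$ with follower utility exactly at the threshold, since $\Lambda(w)$ is convex), followed by a downward pass converting the witnesses into move-recommendation memory states plus a red-flag state that triggers the punishing strategy of Lemma~\ref{lem:punishing}, with $O(|V|)$ memory states overall. The only inaccuracy is your complexity accounting: the extreme points of $\Lambda(v)$ are not bounded by the leaves below $v$, because each follower-node cut $y\ge\theta(v,w)$ can create a new vertex on the cut line (these are precisely the paper's mixed-label outcomes, and for these the witness is a pair of outcomes within the \emph{same} child, not a pair of children), so the per-node bound is $O(|V|)$ and the total work is a larger polynomial (the paper gets $O(|V|^4)$ for the upward pass and $O(|V|^5)$ for the downward pass), which still suffices for the theorem.
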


Our algorithm consists of two parts. 
In the first part, the algorithm performs an upward dynamic programming pass, which constructs for each node $v$ the \emph{commitment set}, which is the set of all the outcomes to which the leader can commit.
If leader can commit to number of different outcomes, he can also commit to any convex combination of those outcomes by choosing the appropriate mixed strategy.
Hence the commitment set is convex and infinite.
As the commitment set is infinite, it is impossible store all its points. 
However, the convexity allows an efficient representation of $\commit_v$, the algorithm stores only finitely many points whose convex hull is the commitment set.
We denote this finite set by $S_v$.
In the upward pass, the algorithm also constructs a function $S$, which for every node $v$ and outcome $p$ in $\commit_v$, defines the route with which leader can commit to $p$ starting at $v$.

In the second part, the algorithm performs a downward pass, which constructs a commitment strategy $\sigma$ with memory for the leader, the memory update function $\memupdate$ and the set of memory states $M$.
The labels constructed in the upward pass are used to construct the commitment at each node. 

\vspace{2mm}

\begin{algorithm}
  \caption{\textbf{Upward Pass}}\label{algp:up}
  \begin{algorithmic}[1]
    \State Sort $V$ in reverse topological order
    \For{\texttt{$v$ in $V$}}
        \State $S_v \leftarrow \{\}$
        \If{\texttt{$v$ is leaf node}}
        \textsc{UpwardLeaf(v)}
        \ElsIf{\texttt{$v$ is a leader node}}
        \textsc{UpwardLeader(v)}
        \Else 
        \textsc{UpwardFollower(v)}
        \EndIf
    \EndFor
\phase{Leaf nodes}    
     \Procedure{UpwardLeaf}{$v$}
        \State Add $v$ to $S_v$
        \State $L(v,v) \leftarrow ()$
    \EndProcedure
\phase{Leader nodes}    
    \Procedure{UpwardLeader}{$v$}
        \For{\texttt{$w$ in $C(v)$}}
            \For{\texttt{$p$ in $S_w$}}
                \State Add $p$ in $S_v$
                \State $L(v,p) \leftarrow (1,w,p)$ 
            \EndFor
        \EndFor
    \EndProcedure
\phase{Follower nodes}    
    \Procedure{UpwardFollower}{$v$}
        \For{\texttt{$w$ in $C(v)$}}
            \For{\texttt{$p$ in $S_w$}}
                \If{\texttt{$u_2(p) \ge i(w)$}}
                    \State Add $p$ to $S_v$ 
                \EndIf
                \State $L(v,p) \leftarrow (1,w,p)$
            \EndFor
            \For{\texttt{$p_1$ and $p_2$ in $S_w$}}
                \If{\texttt{$\alpha \cdot u_2(p_1) + (1-\alpha) \cdot u_2(p_2) = i(w)$ for $0<\alpha<1$}}
                    \State Add $p=\alpha \cdot u_2(p_1) + (1-\alpha)$ to $S_v^w$ 
                \State $L(v,p) \leftarrow (\alpha,w,p_1,1-\alpha,w,p_2)$
                \EndIf
            \EndFor
            \State Add outcome in $S_v^w$ with highest utility for leader to $S_v$  
        \EndFor
    \EndProcedure
  \end{algorithmic}
\end{algorithm}

\noindent {\bf Upward pass}\space

\vspace{1mm}

Each outcome can be stored as a point in two-dimensional space, whose $x$-coordinate denotes the follower's expected utility and $y$-coordinate denotes the leader's expected utility (see Fig. \ref{stackelbergexample}).
For each node $v$ and outcome $p$ in $S_v$, algorithm maintains the label $L(v,p)$, which stores the information about which mixed path from $v$ the leader can guarantee in order to obtain the outcome $p$.
Each label is stored as am ordered tuple.
This label will be used in Part 2 to create a procedure which computes the commitment strategy for the leader.
The label is stored in the form of an ordered tuple, where an element is added for each action which is suggested positive probability.  

The upward pass starts by fixing a reverse topological order of the nodes, the commitment sets for each node are constructed in that order.
This construction is easy for leaf nodes, since starting at a leaf node, the leader can commit to a unique outcome. 
For non-leaf nodes, they are constructed using the commitment sets already constructed for their children. 

\noindent\underline{Leaf nodes:} 

For each leaf node $v \in \leaf$, algorithm assigns $S_v = \{v\}$. For $v$ in $S_v$, assigns an empty label $L(v,v) = ()$.

\underline{Leader nodes:} 

Consider a leader node $v$, a child node $w \in \childset(v)$ and an outcome $p \in \commit_w$, so leader can commit to the outcome $p$ starting at $w$. 
Then leader can also commit to $p$ from $v$, by simply committing to go to $w$ at $v$. 
Hence, for each $p \in S_w$, we have $p \in S_v$.
For each $p \in S_w$, the algorithm assigns 
$L(v,p) = (1,w,p)$
The label $(1,w,p)$ for committing to outcome $p$ starting at $v$ can be read as an instruction as follows: First commit at $v$ to $w$ with probability 1 and then commit to outcome $p$ starting at $w$ using the label $L(w,p)$.
Such a label is called a pure label.

Leader can also commit by playing a mixed action at $v$. 
By Lemma \ref{lemm:onlytwo} for purpose of computing a SSE, it is sufficient to only consider the outcomes obtained by mixing over at most two children. 
Consider children $w_1,w_2 \in \childset(v)$ and outcomes $p_1 \in \commit_{w_1}$, $p_2 \in \commit_{w_2}$. 
Then leader by committing to $w_1$ and $w_2$ with appropriate probabilities can also commit to any outcome which is convex combination of $p_1$ and $p_2$. 
However since $p_1, p_2$ are already in $S_v$, and the algorithm stores only the extreme points of the convex set $S_v$, the algorithm does not store the convex combinations of $p_1$ and $p_2$. 
Hence, $S_v = \Big(\cup_{w\text{ child of }v} S_w\Big)$.

\vspace{1mm}
\noindent\underline{Follower nodes:} The case with follower nodes is bit more complicated for the following reason.
Consider a follower node $v$, a child node $w \in \childset(v)$ and an outcome $p \in \commit_w$, so leader can commit to the outcome $p$ starting at $w$.
Then it is not necessarily true that leader can also commit to $p$ from $v$.
In fact, the leader can commit to $p$ from $v$, if and only if the follower has an incentive to go to $w$ from $v$. 
To prevent the follower from going to node $w'$ instead of $w$, the leader can commit at $w'$ to the outcome in $S_{w'}$ which gives the follower least utility.
If $u_2(p) \ge \min_{p'\in S_{w'}} u_2(p')$ then the follower obtains higher expected utility by going to $w$ at which the leader commits to $p$, than going to $w'$ at which the leader commits to $\argmin_{p'\in S_{w'}} u_2(p')$. 
So the follower has no incentive to prefer $w'$ over $w$. 
Denote the quantity $i(w) = \max_{w' \neq w}\min_{p'\in S_{w'}} u_2(p')$ where $w' \in C(v)$. 
Hence the algorithm adds $p$ in $S_v$ if $u_2(p) \ge i(w)$. 
The label $L(v,p)$ is defined as $L(v,p) = (1,w,p)$.

Now consider two arbitrary outcomes $p_1,p_2$ in $S_w$ such that $u_2(p_1) \ge i(w)$ and $u_2(p_2) < i(w)$.
Then there exists an $\alpha \in (0,1]$ such that $p = \alpha p_1 + (1-\alpha) p_2$ and $u_2(p) = i(w)$. 
In this case, the leader is not able to commit to $p_2$, but he is able to commit to $p$ by first committing to $w$ and then at $w$, committing to outcome $p_1$ with probability $\alpha$ and to outcome $p_2$ with probability $1-\alpha$. 
Let $S_v^w$ a set of such outcomes $p$. 
Note that for any $p,p' \in S_v^w$, $u_2(p) = u_2(p')$.  
The algorithm adds for each $w \in C(v)$, one outcome in $S_v^w$ (provided $S_v^w$ is not empty) which has the maximum utility for the leader (in case of multiple outcomes with same utility for the leader, one outcome is arbitrarily chosen). 

If $p \in S_v^w$ is added to $S_v$, the label $L(v,p)$ is defined as $L(v,p) = (\alpha,w,p_1,1-\alpha,w,p_2)$. 
The label $(\alpha,w,p_1,1-\alpha,w,p_2)$ for committing to the outcome $p$ starting at $v$ can be read as an instruction as follows: 
With probability $\alpha$ commit at $v$ to $w$ and then commit to outcome $p_1$ starting at $w$ using the label $L(w,p_1)$ and with probability $1-\alpha$ commit at $v$ to $w$ and then commit to outcome $p_2$ starting at $w$ using the label $L(w,p_2)$.
Such a label is called a mixed label.

\noindent \underline{Solution:} 

After constructing the commitment sets for all the nodes, the algorithm computes the outcome in $S_{v_0}$ with maximum $y$-coordinate (recall that $v_0$ is the initial node).  
This is the outcome the leader can commit starting from $s_0$ which gives him the maximum expected utility.
Formally, $p^* = \argmax_{p \in S_{v_0}} u_1(p)$. 
\qed

\vspace{3mm}

\noindent{\bf Downward pass}\space 

\vspace{1mm} During the downward pass, the algorithm constructs the commitment strategy $\sigma$ of the leader, the set of memory states $M$ and the memory update function $\memupdate$. 
The procedure $\textsc{Strategy}(v,p,m)$ at a node $v$ with a target outcome $p$ and current memory state $m$ performs the following 3 steps, 
\begin{enumerate}
    \item \underline{Memory update:} After a player moves to a node $w$ from $v$ on observing the memory $m$, this step updates the memory state to a new state denoted by $\memupdate(m,v,w)$.  
    \item \underline{Commitment of the leader:} If $v$ is a leader node, this step construct a (possibly) mixed actions at $v$ for committing to the target outcome $p$.
    If $v$ is a follower node, then this step is skipped. 
    \item \underline{Next target outcome:} After moving to a node $w$ from $v$, this step chooses the target outcome to commit from $w$. 
\end{enumerate}

The algorithm starts by performing the $\textsc{Strategy}(v_0,p^*,m_0)$ where $m_0$ is an initial memory state denoting the empty memory. 
After finishing the 3 steps of $\textsc{Strategy}(v,p,m)$ for some a given $v,p,m$, the algorithm continues by performing $\textsc{Strategy}(w,p',\memupdate(m,v,w))$ for each $w\in C(v)$ obtained in the Step 2 and the corresponding next target outcome $p'$ obtained in the Step 3.
The downward pass is finished when a leaf node is reached.
We now describe in detail these 3 steps for specific $v,p,m$.

\begin{algorithm}
  \caption{\textbf{Downward Pass}}\label{algp:dow}
  \begin{algorithmic}[1]
    \State $v \leftarrow v_0$ \Comment{$v_0$ is the root node}
    \State $p \leftarrow p^*$ \Comment{$p^*$ is outcome in $S_{v_0}$ with highest utility for leader}
    \State Construct $G_{v_0,p^*}$ \Comment{$G_{v_0,p^*}$ is mixed path graph constructed using $L(v_0,p^*)$} 
    \State $m \leftarrow m_0$ \Comment{$m_0$ is initial memory constructed using $G_{v_0,p^*}$, details are in the proof}
    \State \textsc{Strategy($v_0,p^*,m_0$)}

\phase{Procedure \textsc{Strategy}}    

    \Procedure{Strategy}{$v,p,m$}
        \While{\texttt{$v$ is not leaf node}}
        \State $v',p,',m' \leftarrow \textsc{NewTarget($v,p,m$)}$
        \State \textsc{Strategy($v',p',m'$)}
        \EndWhile
    \EndProcedure

\phase{Memory update function}        

    \Procedure{MemoryUpdate}{$v,p,m$}
        \If{\texttt{$m=R$}} 
            \Return $R$
        \EndIf
        \If{\texttt{$v$ is leader node}}
            \If{\texttt{Leader constructs randomization device}}
                \State \Return $m'$ \Comment{The details of randomization device and $m'$ are in the proof}
            \Else \Return $m$
            \EndIf
        \Else 
            \If{\texttt{$(v,w)$ is in $m$ and follower does not play to $w$}}
                \Return $R$
            \Else \Return $m$
            \EndIf
        \EndIf

    \EndProcedure

\phase{Commitment strategy for the leader}    

    \Procedure{Commitment}{$v,p,m$}
        \If{\texttt{$v$ is follower node}} \Return
        \EndIf
        \If{\texttt{$m=R$}}
            \State Commit to a punishing strategy
            \State \Return $(v',p')$ \Comment{$v',p'$ are the node and outcome suggested by the punishing strategy}
        \Else 
            \If{\texttt{L(v,p) = (1,w,p)}}
                \State Commit to $w$
                Return $(w,p)$
            \EndIf
        \EndIf
    \EndProcedure

\phase{New target node and outcome}    

    \Procedure{NewTarget}{$v,p,m$}
        \If{\texttt{$v$ is leader  node}}
            \State $(w,p) \leftarrow \textsc{Commitment($v,p,m$)}$
            \State \Return $(w,p,\textsc{MemoryUpdate($v,p,m$)})$    
        \Else 
            \If{\texttt{\textsc{MemoryUpdate($v,p,m$)} = R}}
                \State \Return $(v,'p',R)$ \Comment{$v',p'$ are the node and outcome suggested by the punishing strategy}
            \ElsIf{\texttt{$L(v,p) = (1,w,p)$}}
                \State \Return $(w,p,m)$
            \ElsIf{\texttt{$L(v,p) = (\alpha,w,p_1,1-\alpha,w,p_2)$}}
                \State \Return $(w,p_1,m)$ with probability $\alpha$ and \Return $(w,p_2,m)$ with probability $1-\alpha$    
            \EndIf
        \EndIf
    \EndProcedure
  \end{algorithmic}
\end{algorithm}

\noindent \underline{Memory update:}\space 

We shall first define some terminology that we need to construct the memory update function.
Set of memory states $M$ is initialized to an empty set.
Each time the procedure $strategy$ is performed, if a new memory state is created, it is added to $M$. 
Each memory state within the set $M$ is again stored as a set.
Every memory state contains encoded information of two types: 
(1) Move suggestion (m): This type of memory recommends the follower to take a specific action. For instance, if $(v,w)$ is an element of a memory state $m$, then follower at node $v$ is recommended to move to node $w$ on observing memory $m$. 
(2) Red flag (R): This memory informs the leader if the follower has not obeyed the recommendation in the past. There is a unique memory state of this kind. The leader is recommended to use a punishing strategy on observing the memory $R$.

A mixed path $G_{v,p}$ corresponding to a node $v$ and an outcome $p$ is a graph which imitates the recommendations of the label $L(v,p)$.
$G_{v,p}$ can be constructed using $L(v,p)$ as follows.
$G_{v,p}$ is initialized as an empty graph with same set of nodes as the original graph. 
The operation $addedge(w,p')$ adds an edge from $w$ to $w'$ where $L(w,p') = (1,w',p')$ or $L(w,p') = (\alpha,w',p_1,1-\alpha,w',p_2)$.    
Edges are sequentially added in $G_{v,p}$ starting by performing $addedge(v,p)$. 
For any $w,p'$ if $addedge(w,p')$ is performed, then if $L(w,p') = (1,w',p')$ then $addedge(w,p')$ is also performed, and if $L(w,p') = (\alpha,w',p_1,1-\alpha,w',p_2)$ then $addedge(w',p_1)$ and $addedge(w',p_2)$ are also performed.
Edge duplication is not allowed.

A node $w$ is \emph{visited} in the mixed path $G_{v,p}$ if there is edge in $G_{v,p}$ incoming to $w$.
Let $successors(v,p)$ be the set of all nodes that are visited in $G_{v,p}$.
We say that a label $L(w,p')$ is contained in label $L(v,p)$ if the mixed path $G_{w,p'}$ is a subgraph of $G_{v,p}$.
$G_{v,p}$ \emph{converges} (or \emph{diverges}) at node $w$ if $w$ has more than 2 incoming (or outgoing) edges in $G_{v,p}$.
Two mixed paths $G_{v,p}$ and $G_{v',p'}$ \emph{converge} (or \emph{diverge}) at node $w$ if $w$ has at least one incoming (or outgoing) edge in both the $G_{v,p}$ and $G_{v',p'}$. 
The memory state at the root node denoted by $m_0$ is constructed using the mixed path $G_{v_0,p^*}$. 
$m_0$ is initialized as an empty set. 
For every follower node $v$ at which $G_{v_0,p^*}$ does not diverge, add $(v,w)$ to $m_0$ where $(v,w)$ is the unique edge outgoing from $v$ in $G_{v_0,p^*}$.
The memory state $m_0$ is added to $M$.

Firstly $\memupdate(R,v,w) = R$ for any node $v$ and $w \in C(v)$, that is, if the follower does not obey the recommendations at some node, this Red flag memory does not change till the end.
Consider a follower node $v$. 
When the memory $m$ contains an element $(v,w)$ for some $w \in C(v)$, if follower moves $w$, then $\memupdate(m,v,w') = m$ and if follower moves to $w' \in C(v)$ other than $w$, then $\memupdate(m,v,w') = R$.
When $m$ does not contain an element $(v,w)$ for any $w \in C(v)$, then $\memupdate(m,v,w) = m$ for any $w \in C(v)$.

Construction of memory update function for the leader nodes is bit more complicated, as it depends on the mixed labels of past follower nodes. 
Consider a leader node $v$, then $\memupdate(m,v,w)$ returns a new memory state only in the following scenario.
Let there be a follower node $v'$ visited in $G_{v_0,p^*}$ with mixed label $L(v',p) = (\alpha,w',p_1,1-\alpha,w',p_2)$ such that 
as the mixed paths $G_{w,p_1}$ and $G_{w,p_2}$ traverse down the graph, diverge for the first time at the leader node $v$, say to $w_1$ and $w_2$ respectively where $w_1,w_2 \in C(v)$.
If there is a follower node $w'$ in $successors(w_1,p_1) \cap successors(w_2,p_2)$ and the mixed paths $G_{w,p_1}$ and $G_{w,p_2}$ diverge again at $w'$, say $G_{w,p_1}$ goes to $w'_1$ and $G_{w,p_2}$ goes to $w'_2$ (where $w'_1,w'_2 \in C(w')$), then algorithm creates two new memory states $m_{w_1}$ and $m_{w_2}$.  
Initialize $m_{w_1} = m_{w_2} = m$.
For every such $w'$ satisfying the above, add $(w',w'_1)$ to $m_{w_1}$ and $(w',w'_2)$ to  $m_{w_2}$.
Finally $\memupdate(m,v,w_1) = m_{w_1}$ and $\memupdate(m,v,w_2) = m_{w_2}$, $m_{w_1}$ and $m_{w_2}$ are added to $M$. For every other $w"\in C(v)$ other than $w_1,w_2$, $\memupdate(m,v,w") = m$
If either of the above conditions are not met, then $\memupdate(m,v,w") = m$ for each $w" \in C(v)$.

\vspace{2mm}
\noindent \underline{Commitment of the leader:}\space If the memory $m_v = R$ at any leader node $v$, then the algorithm commits to a punishing strategy for the leader at node $v$ for the reminder of the game. 
If $m_v \neq R$, then the commitment depends on the label $L(v,p)$.
If $L(v,p) = (1,w,p)$ for some $w \in C(v)$, then the leader should simply commit to $w$ with probability 1.

Note that the label $L(v,p)$ for any leader node $v$ and for any outcome $p$ is always a pure label, so the leader is committed to play pure action conditional on the fact that the label $L(v,p)$ is reached. 
However when the labels $L(v,p) = (1,w,p)$ are $L(v,p') = (1,w',p')$ are both reached reached with positive probabilities $\alpha$ and $1-\alpha$ respectively for $w'$ different from $w$, then the leader should be committed to a $w$ with probability $\alpha$ and to $w'$ with probability $1-\alpha$.  

\vspace{2mm}
\noindent \underline{Next target outcome:}\space 
Consider a node $v$ with target outcome $p$ and memory $m_v$. If $L(v,p) = (1,w,p)$, then in the next step $\textsc{Strategy}(w,p,\memupdate(m,v,w))$ is performed. 
If $L(v,p) = (\alpha,w,p_1,1-\alpha,w,p_2)$, then $\textsc{Strategy}(v,p_1,m_v)$ and $\textsc{Strategy}(v,p_2,m_v)$ are performed, with probabilities $\alpha$ and $1-\alpha$ respectively.
\qed

\vspace{2mm}

\noindent{\bf Runtime Analysis}\space 

We start by bounding the size of $S_v$ for each $v$.
Each outcome in $S_v$ is either picked up from $S_w$ for some $w\in C(v)$, or it is newly created while constructing $S_v$.
The new outcomes are created in two ways. 
Firstly an outcome can be generated directly a leaf node, number of such outcomes is bounded by the number of leaves, that is $|L|$.
Secondly an outcome can be created by a mixture of two outcomes in $S_w$ for some $w \in C(v)$.
Since at most one such outcome can be created for any $w \in C(v)$, the number of outcomes of second type can be bounded by $|C(v)|$. 
So for any node $v$, $|S_v|$ is bounded by $|V| + |L|$ or simply by $2|V|$.

During the upward pass, to construct $S_v$ for node the algorithm needs to deal each pair of outcomes in $S_w$ for each $w \in S_v$.
Thus for each pair of $v$ and $w \in C(v)$ we need to compute the value $i(w)$ (which takes $O(|C(v)||V|)$ operations) and for each pair of outcomes we need to decide whether a new outcome should be created (which takes $O(|S_w|^2)$ operations). 
Hence for each pair of $v,w$, we need $O(|V|^2)$ operations. 
So the upward pass needs $O(|V|^4)$ operations.
Choosing the outcome optimal for the leader takes $ |S_{v_0}| = O(|V|)$ operations. 

During the downward pass, for any $v$ and $p$, mixed path $G_{v,p}$ can be constructed in $O(|V|^2)$ operations, since each node and edge in the graph is traversed at most once. 
Construction of $m_0$ requires $O(|V|^2)$ operations. 
While performing the procedure $\textsc{Strategy}(v,p,m)$ if the $v$ is a follower node, then $\textsc{MemoryUpdate}(v,p,m)$ is performs $O(1)$ operations. 
When $v$ is a leader node, whether a randomization device should be created can be checked in $O(|V|^3)$ operations. 
Since at most constant number of memory states are created corresponding to each node, the size of set of memory states is $O(|V|)$. 
Commitment at each leader node takes $O(1)$ operations and next target node and outcome can also be determined in $O(1)$ operations.
Since $\textsc{Strategy}$ is performed at most $O(|V|)$ times, the downward pass takes $O(|V|^5)$ operations.

\begin{examp} \rm

\begin{figure}
\centering
\begin{tikzpicture}
		[node distance=0.4cm and 0.8cm, label distance=-8pt]
	  \tikzstyle{leader}=[circle,draw,inner sep=1.5]
   	\tikzstyle{leaf}=[]
 	  \tikzstyle{follower}=[rectangle,draw,inner sep=2]
  	\tikzstyle{nature}=[diamond,draw,inner sep=1.5]
  	\node(a)[follower]{$v_0$};
  	\node(b)[leader, right=of a]{$v_2$};
  	\node(d)[follower, above right=of b]{$v_3$};
  	\node(e)[follower, below right=of b]{$v_4$};
  	\node(g)[follower, above right=of e]{$v_5$};
  	\node(i)[leader, above right=of g]{$v_6$};
  	\node(j)[leader, below right=of g]{$v_7$};
  	\node(l)[leaf, above=of k]{$(3, 4) \hspace{2mm} \bf{l_1}$};
  	 \node(k)[leaf, right=of i]{$(4, 0) \hspace{2mm} \bf{l_2}$};
  	\node(m)[leaf, right=of j]{$(2, 0) \hspace{2mm} \bf{l_3}$};
  	\node(n)[leaf, below=of m]{$(1, 6) \hspace{2mm} \bf{l_4}$};
  	 \node(c)[leaf, below=of a]{$(0,5) \hspace{2mm} \bf{l_5}$};
  	\draw[->] 
		  (b) edge node[label=above left:{$a^2_u$}]{} (d)
		  (b) edge node[label=below left:{$a^2_d$}]{} (e)
          (d) edge (g) 
          (e) edge (g)
		  (g) edge node[label=above left:{$a^5_u$}]{} (i)
		  (g) edge node[label=below left:{$a^5_d$}]{} (j)
		  (i) -- node[midway,sloped,below]{$a^5_d$} (k)
		  (i) edge node[label=above left:{$a^6_u$}]{} (l)
		  (j) edge node[label=below left:{$a^7_d$}]{} (n);
		  \draw[->] (j) -- node[midway,sloped,above]{$a^7_u$} (m);
		  \draw[->] (a) -- (b) node[midway,sloped,above]{$a^1_r$};
		  \draw[->] (a) -- (c) node[midway,left]{$a^1_d$};
\end{tikzpicture}
\caption{An example of a game on a DAG. The leader plays in circular nodes,
the follower plays in square nodes.}
\label{dagexampleagain}
\end{figure}
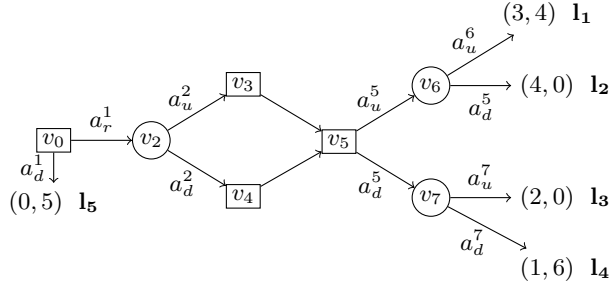

We will revisit the example from Section \ref{Sec:Memory} to demonstrate how our algorithm works.
In the upward pass, the algorithm computes the sets $S_v$ for each node $v$ in a bottom up manner. 
The algorithm starts with leaf nodes: $S_{l_i} = \{l_i\}$ and $L(l_i,l_i) = ()$  for $i=1,2,3,4,5$.
At the leader node $v_6$, leader can commit to both $l_1$ and $l_2$, so $S_{v_6} = \{l_1,l_2\}$ and $L(v_6,l_1) = (1,l_1,l_1)$ and $L(v_6,l_2) = (1,l_2,l_2)$. 
Similarly, $S_{v_7} = \{l_3,l_4\}$.
At the follower node $v_5$, the algorithm checks all the outcomes in $S_{v_6}$ and $S_{v_7}$. 
Firstly to check whether $l_1$ should be added to $S_{v_5}$, the algorithm computes the quantities
$u_2(l_1) = 4$ and
$i(v_6) = \max_{w' \in C(v_5),w' \neq v_6}\min_{p'\in S_{w'}} u_2(p') = u_2(l_3) = 0$.
Since $u_2(l_1) \ge i(v_6)$, $l_1$ is added to $S_{v_5}$. 
As $u_2(l_2) = 0 \ge i(v_6)$, $l_2$ is also added to $S_{v_5}$.
With similar reasoning, $l_3$ and $l_4$ are also added to $S_{v_5}$. 
Hence $S_{v_5} = \{l_1,l_2,l_3,l_4\}$. 
Labels at $v_5$ are defined as $L(v_5,l_1) = (1,v_6,l_1)$, $L(v_5,l_2) = (1,v_6,l_2)$, $L(v_5,l_3) = (1,v_7,l_3)$ and $L(v_5,l_4) = (1,v_7,l_4)$.
Continuing upwards, the sets $S_{v_3} = S_{v_4} = S_{v_2} = \{l_1,l_2,l_3,l_4\}$ are constructed. 

Now consider the follower node $v_0$, which is also a root node.
Since the follower can go to leaf node $v_5$ and obtain utility 5, we have $i(v_2) = 5$. 
Since $u_2(l_1),u_2(l_2),u_2(l_3) < i(v_2)$, the outcomes $l_1,l_2,l_3$ are not added to $S_{v_0}$, and since $u_2(l_4) \ge i(v_2)$, $l_4$ is added to $S_{v_0}$. 
Also $l_5$ is added to $S_{v_0}$ since $5 = u_2(l_1) \ge i(l_5) = u_2(l_2) = 0$.
Finally, all the pairs of leaves are considered to check if any mixed outcome should be added to $S_{v_0}$. 
It can be seen that since $u_2(\frac{1}{2}(l_1)+\frac{1}{2}(l_4)) = 5 = i(v_2)$, outcome $\frac{1}{2}(l_1)+\frac{1}{2}(l_4)$ is added to $S_{v_0}$.
Similarly, the outcomes $\frac{1}{6}(l_2)+\frac{5}{6}(l_4)$ and $\frac{1}{6}(l_3)+\frac{5}{6}(l_4)$ are added to $S_{v_0}$.
Hence, $S_{v_0} = \{l_4,l_5,\frac{1}{2}(l_1)+\frac{1}{2}(l_4),\frac{1}{6}(l_2)+\frac{5}{6}(l_4),\frac{1}{6}(l_3)+\frac{5}{6}(l_4)\}$. 
Finally the outcome $p^* = \frac{1}{2}(l_1)+\frac{1}{2}(l_4)$ is chosen since the leader obtains largest expected utility for this outcome among all the outcomes in the convex hull of $S_{v_0}$. 
Since $p^*$ is newly created by mixing two different outcomes, the label would be a mixed label and defined as $L(v_0,p^*) = (\frac{1}{2},v_2,l_1,\frac{1}{2},v_2,l_4)$.
Labels for the other outcomes are defined similarly.
This completes the upward pass.

The downward pass begins with the construction of the mixed path $G_{v_0,p^*}$ where $p^* = \frac{1}{2}(l_1)+\frac{1}{2}(l_4)$.
Using the labels defined in the upward pass, the edges $(v_0,v_2)$, $(v_2,v_3)$, $(v_2,v_4)$, $(v_3,v_5)$, $(v_4,v_5)$, $(v_5,v_6)$, $(v_5,v_7)$, $(v_6,l_1)$, $(v_6,l_2)$, $(v_6,l_3)$, $(v_6,l_4)$ are added.
Initial memory $m_0 = \{(v_0,v_2),(v3,v_5),(v_4,v_5)\}$ is also constructed, since there is a unique outgoing edge in $G_{v_0,p^*}$ from the follower nodes $v_0,v_3$ and $v_4$.

Procedure \textsc{Strategy($v_0,p^*,m_0$)} is performed. 
As $(v_0,v_2)\in m_0$, if follower moves to $l_5$, memory is updated to $R$. 
If follower moves to $v_2$, then memory is not changed and since $L(v_0,p^*) = (\frac{1}{2},v_2,l_1,\frac{1}{2},v_2,l_4)$, procedures \textsc{Strategy($v_2,l_1,m_0$)} and \textsc{Strategy($v_2,l_4,m_0$)} are performed with probability $\frac{1}{2}$ each.
While performing \textsc{Strategy($v_2,l_1,m_0$)}, since $L(v_2,l_1) = (1,v_3,l_1)$, leader commits to $v_3$ with probability 1 and a new memory state $m_3$ is created and defined as $m_3 = m_0 \cup \{v_5,v_6\}$. This acts as a suggestion to follower to move to $v_6$ after node $v_5$.  
Whereas after performing \textsc{Strategy($v_2,l_4,m_0$)}, since $L(v_2,l_1) = (1,v_4,l_4)$, leader commits to $v_4$ with probability 1 and a new memory state $m_4$ is created and defined as $m_4 = m_0 \cup \{v_5,v_7\}$. 
The node $v_5$ is reached in two ways, with probability $\frac{1}{2}$ via $v_3$ and has memory state $m_3$ and with probability $\frac{1}{2}$ via $v_4$ and has memory state $m_4$.
On observing $m_3$ if follower moves to $v_7$, memory state is changed to $R$ since $(v_5,v_6) \in m_3$. In this case the leader plays a punishing strategy, that is moves to $l_3$ so that the follower gets 0. 
On observing $m_3$ if follower moves to $v_6$, memory state is not changed and \textsc{Strategy($v_6,l_1,m_3$)} is performed.
Similarly on observing memory $m_4$ if follower moves to $v_6$. leader plays punishing strategy and if follower moves to $m_7$, \textsc{Strategy($v_7,l_4,m_4$)} is performed.

\end{examp}

\vspace{-0.3cm}
\section{Algorithm for Games Without Chance Nodes on Directed Graphs}\label{sec:algoDG}

In this section, we consider the games on DGs without chance nodes. 
When there are directed cycles, the game has infinite horizon.
The players still receive utility only after a leaf node is reached. 
Due to the existence of directed cycles, players can move indefinitely without ever reaching a leaf node. 
Since we assume that infinite play gives zero payoff to both the players, it might be beneficial for players to move along a cycle indefinitely, if the terminal payoffs are negative. 

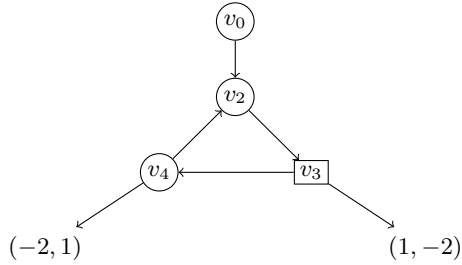
\begin{figure}
\centering
\begin{tikzpicture}
		[node distance=0.4cm and 0.8cm, label distance=-8pt]
	  \tikzstyle{leader}=[circle,draw,inner sep=1.5]
   	\tikzstyle{leaf}=[]
 	  \tikzstyle{follower}=[rectangle,draw,inner sep=2]
  	\tikzstyle{nature}=[diamond,draw,inner sep=1.5]
  	
  	\node(t) at (0,3) [leader]{$v_0$};
  	\node(s) at (0,2) [leader]{$v_2$};
  	\node(u) at (1,1) [follower]{$v_3$};
  	\node(q) at (-1,1) [leader]{$v_4$};

  	\node(x) at (2.5,0) {$(1,-2)$};
  	\node(a) at (-2.5,0){$(-2,1)$};

  	\draw[->] 

  	      (t) edge (s)
  	      (s) edge (u)
  	      (u) edge (q)
  	      (q) edge (s)
  	      (u) edge (x)
  	      (q) edge (a)
		  ;
\end{tikzpicture}
\caption{Here in the unique SSE game is played indefinitely and both players receive zero. }
\label{fig:infinite}
\end{figure}

\begin{examp}\rm \label{ex:infinite}
The Fig. \ref{fig:infinite} shows an example of a game containing directed cycle where some of the entries in terminal payoff are negative. In this game, there is a unique SSE in which the players move indefinitely in the cycle with nodes $v_2,v_3$ and $v_4$ and none of the leaf is ever reached.
\hfill $\diamond$
\end{examp}

To avoid scenarios like in Example  \ref{ex:infinite}, we assume in the rest of the sections that all the terminal payoffs for both the players are non-negative.

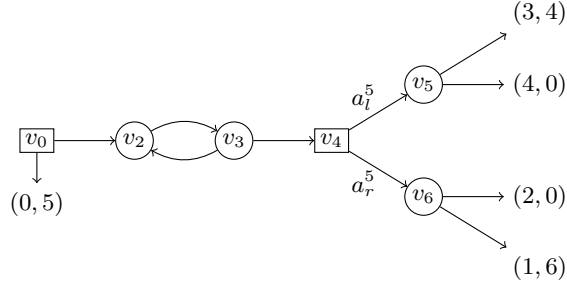
\begin{figure}
\centering
\begin{tikzpicture}
		[node distance=0.4cm and 0.8cm, label distance=-8pt]
	  \tikzstyle{leader}=[circle,draw,inner sep=1.5]
   	\tikzstyle{leaf}=[]
 	  \tikzstyle{follower}=[rectangle,draw,inner sep=2]
  	\tikzstyle{nature}=[diamond,draw,inner sep=1.5]
  	\node(a)[follower]{$v_0$};
  	\node(b)[leader, right=of a]{$v_2$};
  	\node(o)[leader, right=of b]{$v_3$};
  	\node(c)[leaf, below=of a]{$(0,5)$};
  	\node(g)[follower, right=of o]{$v_4$};
  	\node(i)[leader, above right=of g]{$v_5$};
  	\node(j)[leader, below right=of g]{$v_6$};
  	\node(k)[leaf, right=of i]{$(4, 0)$};
  	\node(l)[leaf, above=of k]{$(3, 4)$};
  	\node(m)[leaf, right=of j]{$(2, 0)$};
  	\node(n)[leaf, below=of m]{$(1, 6)$};
  	\draw[->] 
  	      (a) edge (b) 
  	      (a) edge (c)
  	      (b) edge [bend left] node [left] {} (o)
  	      (o) edge [bend left] node [left] {} (b)
  	      (o) edge (g)
		  (g) edge node[label=above left:{$a^5_l$}]{} (i)
		  (g) edge node[label=below left:{$a^5_r$}]{} (j)
		  (i) edge (k) 
		  (i) edge (l) 
		  (j) edge (m) 
		  (j) edge (n);
\end{tikzpicture}
\caption{An example of a game on a DG. In SSE, the nodes $v_2$ and $v_3$ are visited twice.}
\label{dgexample1}
\end{figure}

\begin{examp}\rm \label{ex:DG}
Fig. \ref{dgexample1} describes an example of a game which is small variant of the one described by Fig.\ref{dagexample}. 
In this game any optimal commitment strategy of the leader ensures that the states $v_2$ and $v_3$ are visited more than once with positive probability. 
Like in Fig.\ref{dagexample}, the leader can obtain the payoff 2 if after state $v_4$, the states $v_5$ and $v_6$ are reached with equal probabilities. 
This can be achieved by creating a randomization device. 
For instance suppose, on the first visit at $v_3$ the leader goes to $v_4$ and $v_2$ with equal probabilities, and on second visit at $v_3$, the leader goes to $v_4$ with probability 1. 
Now, the leader can commit at nodes $v_5$ and $v_6$ so as to `recommend' the follower that at node $v_4$ he must go to $v_5$ if in his observed history $v_3$ was visited only once and to $v_6$ if in his observed history $v_3$ was visited twice. 
Since the leader also observes the same history, he can punish the follower by going to $(4,0)$ after $v_6$ and going to $(2,0)$ after $v_7$ if the follower does not follow the recommendation, thus giving the follower a strict incentive to follow the leader's recommendation.
\hfill $\diamond$
\end{examp}

\begin{figure}
\centering
\begin{tikzpicture}
		[node distance=0.4cm and 0.8cm, label distance=-8pt]
	  \tikzstyle{leader}=[circle,draw,inner sep=1.5]
   	\tikzstyle{leaf}=[]
 	  \tikzstyle{follower}=[rectangle,draw,inner sep=2]
  	\tikzstyle{nature}=[diamond,draw,inner sep=1.5]
  	
  	\node(t) at (0,3) [follower]{$v_0$};
  	\node(v) at (1.5,3) [leaf]{$(0,5.5)$};
  	\node(s) at (0,2) [leader]{$v_2$};
  	\node(u) at (1,1) [leader]{$v_3$};
  	\node(q) at (-1,1) [leader]{$v_4$};

  	\node(x) at (2.5,0) [follower]{$v_5$};
  	\node(y)[leader, above right=of x]{$v_6$};
  	\node(z)[leader, below right=of x]{$v_7$};
  	\node(w)[leaf, right=of y]{$(4, 0)$};
  	\node(l)[leaf, above=of w]{$(3, 4)$};
  	\node(p)[leaf, right=of z]{$(2, 0)$};
  	\node(r)[leaf, below=of p]{$(1, 6)$};
  	
  	\node(a) at (-2.5,0) [follower]{$v_8$};
  	\node(b)[leader, above left=of a]{$v_9$};
  	\node(c)[leader, below left=of a]{$v_{10}$};
  	\node(d)[leaf, left=of b]{$(3, 1)$};
  	\node(e)[leaf, above=of d]{$(2, 5)$};
  	\node(f)[leaf, left=of c]{$(1, 1)$};
  	\node(g)[leaf, below=of f]{$(0, 7)$};

  	\draw[->] 
  	      (t) edge (v)
  	      (t) edge (s)
  	      (s) edge (u)
  	      (u) edge (q)
  	      (q) edge (s)
  	      (u) edge (x)
  	      (q) edge (a)
  	
		  (x) edge (y)
		  (x) edge (z)
		  (y) edge (w)
		  (y) edge (l)
		  (z) edge (p)
		  (z) edge (r)
		  
		  (a) edge (b)
		  (a) edge (c)
		  (b) edge (d)
		  (b) edge (e)
		  (c) edge (f)
		  (c) edge (g)
		  
		  ;
\end{tikzpicture}
\caption{Another example of a game on a DG. In SSE, the nodes $v_2$, $v_3$ and $v_4$ are visited twice. }
\label{dgexample2}
\end{figure}
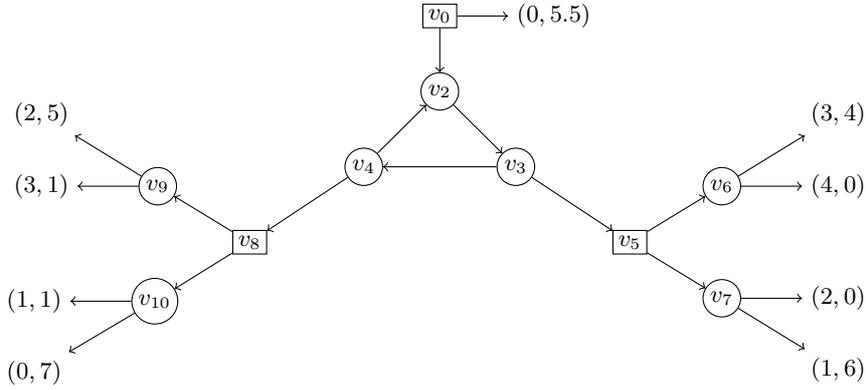

\begin{examp}\rm \label{ex:DG2}
The Fig. \ref{dgexample2} shows an example similar to the Example \ref{ex:DG}. In any SSE in which the outcomes (3,4), (1,6), (2,5), (0,7) are achieved with equal probabilities, hence the expected payoff vector is (1.5,5.5). 
We can describe the strategy of the leader in one such SSE as follows.
On the first visit at $v_3$, the leader plays $v_5$ with probability $\frac{1}{4}$ and $v_4$ with probability $\frac{3}{4}$. 
On the first visit at $v_4$, he plays $v_8$ with probability $\frac{1}{3}$ and $v_4$ with probability $\frac{2}{3}$.
On the second visit at $v_3$, the leader plays $v_5$ with probability $\frac{1}{2}$ and $v_4$ with probability $\frac{1}{2}$. 
On the first visit at $v_4$, he plays $v_8$ with probability 1.
At $v_5$, the follower is incentivized to play $v_6$ if in the history visited $v_3$ only once, and to play $v_7$ if in the history visited $v_3$ twice.
Similarly at $v_8$, the follower is incentivized to play $v_9$ if in the history visited $v_4$ only once, and to play $v_{10}$ if in the history visited $v_4$ twice.
This incentivization is achieved by using punishing strategies.
\hfill $\diamond$
\end{examp}

When the graph has with directed cycles, there are two main issues which make the computation of a SSE more difficult compared to the case with DAGs. 
Firstly, due to the existence of directed cycles there does not exist a reverse topological order of the nodes, and so, straightforward dynamic programming techniques do not work. 
Secondly, as discussed in examples \ref{ex:DG} and \ref{ex:DG2}, there are games in which in any SSE, one or more nodes are visited more than once with positive probability. 
The outcomes to which leader can commit starting at a node by visiting certain nodes more than once, might not be included in the commitment set of that node which is obtained by a single iteration of the upward pass identical to that of Section \ref{sec:algoDAG}. 
So unlike in the case of of DAGs, a single iteration of an upward pass may not sufficient to accurately construct the commitment sets.

The first issue can be solved relatively easily by decomposing the graph into strongly connected components (SCCs) and dealing with the SCCs as a whole by fixing a reverse topological order on SCCs.
In order to circumvent the second issue, we show (Proposition \ref{prop:twotimes}) that there exists a SSE in which each node is visited at most 2 times. 
So, we can modify the algorithm from section \ref{sec:algoDAG} such that upward pass is performed twice followed by the exact same downward pass. 
This allows us to use Theorem \ref{thm:DAG} and Proposition \ref{prop:twotimes} to construct a polynomial time algorithm that computes a SSE where the players use strategy profile with memory. 

The main intuition for Proposition \ref{prop:twotimes} is as follows. 
Since the players do not receive any payoff before a leaf is reached, their final payoff only depends on the terminal node, and not on the exact path taken.
However by following certain paths with positive probability, the leader the leader can incentivize the actions of the follower so that the desired leaf nodes are reached with positive probability. 
On certain paths, a node may need to be visited more than once, so that two different possible histories at that node.
Leader can use this fact to create a randomization device. 
On the other hand, visiting a node more than two times does not help the leader any way to obtain better outcomes.   

\begin{prop}\label{prop:twotimes}\rm
In any game on DG without chance nodes, there exists a SSE in which each node is visited at most two times, in any path with positive probability. 
\end{prop}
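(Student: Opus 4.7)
The strategy is to transform an arbitrary SSE $(\sigma_1, \sigma_2)$ into an equivalent SSE in which each node appears at most twice on any positive-probability path, while preserving the leader's expected utility and the follower's best-response property. The driving observation is that, by Lemma \ref{lemm:onlytwo}, the overall outcome of any SSE can be represented (up to weak Pareto improvement) as a mixture of at most two pure leaf outcomes, so the leader never needs to create more than two distinct futures at any decision point.

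First, we invoke Lemma \ref{lemm:follower-pure} to assume the follower plays a pure best response, then Lemma \ref{lemm:onlytwo} at the root to reduce the SSE outcome to $p^* = \alpha L_1 + (1-\alpha) L_2$ for two leaves $L_1, L_2$. Next, we decompose the graph into strongly connected components and process them in reverse topological order among SCCs. For each SCC $C$, the restricted strategy profile can be viewed as a flow from the entry points of $C$ to its exits into already-processed descendant SCCs; by a local application of Lemma \ref{lemm:onlytwo} to the convex set of achievable exit-distributions from a fixed entry point of $C$, only two extremal exit patterns need to be mixed to realize the contribution of $C$ to the global outcome.

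The core structural claim is then: within any SCC, any target mixture of two exit patterns can be implemented by a leader strategy that visits each node at most twice. The two visits play the role of the two contexts required to enforce two distinct downstream continuations: on the first visit the leader commits (using memory-based punishment exactly as in Theorem \ref{thm:DAG}) to the continuation leading to $L_1$, and on the second visit to the continuation leading to $L_2$. The cycle structure of $C$ supplies the required randomization device, exactly as in Examples \ref{ex:DG} and \ref{ex:DG2}: at an appropriate leader node, the first visit routes play back into $C$ with probability $1-\alpha$ (triggering a second visit) and continues toward $L_1$ with probability $\alpha$, while the second visit deterministically continues toward $L_2$. The memory update function of Section \ref{sec:algoDAG} is augmented with a visit counter modulo two, so that the ``first visit'' and ``second visit'' memory states carry distinct move suggestions, and deviations by the follower are punished exactly as in the DAG case.

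The main obstacle is making this structural claim rigorous in full generality. The delicate point is that the two candidate ``exit patterns'' are not static outcomes but full commitment sub-strategies in descendant SCCs, so the convex combination argument of Lemma \ref{lemm:onlytwo} must be lifted from outcomes to strategies, and one must verify that the randomization device built inside $C$ terminates almost surely with exactly the prescribed probabilities $\alpha, 1-\alpha$ without interfering with the incentives of follower nodes inside $C$ whose best responses depend on downstream commitments. Once the local claim is established for each SCC, the global two-visit bound follows by combining the per-SCC guarantees in reverse topological SCC order, since each node belongs to a unique SCC.
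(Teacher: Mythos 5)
Your overall skeleton matches the paper's proof---pure follower via Lemma \ref{lemm:follower-pure}, decomposition into SCCs processed in reverse topological order, and a local application of Lemma \ref{lemm:onlytwo}---but there are two genuine problems. First, your opening move of applying Lemma \ref{lemm:onlytwo} \emph{at the root} to replace the SSE outcome by a two-leaf mixture $\alpha L_1+(1-\alpha)L_2$ is unjustified and not in the paper. That lemma is a statement in payoff space: the weakly-preferred point $B$ is a convex combination of leaf payoff vectors, but it need not belong to the leader's commitment set, because reweighting the leaves changes the follower's \emph{conditional} continuation utilities at interior follower nodes, and the incentive constraints (the thresholds $i(w)$ from Section \ref{sec:algoDAG}) can then fail. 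Indeed, nested mixed labels show that SSE outcomes may genuinely require mixing more than two leaves, so the rest of your argument, which hard-wires exactly two global targets $L_1,L_2$ (``first visit $\to L_1$, second visit $\to L_2$''), is built on a false simplification. The paper avoids this entirely: it applies Lemma \ref{lemm:onlytwo} only to the distribution over \emph{exit histories} $E_C^h$ of a single SCC $C$ for a fixed entry history $h$, keeping the exit histories' continuations (and hence all downstream conditional utilities) untouched---which is precisely why the ``lifting from outcomes to strategies'' that you worry about is never needed.

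Second, the step you yourself flag as ``the main obstacle''---that any target mixture over two exit patterns of an SCC is implementable with each node visited at most twice---is exactly where your proposal stops and the paper's proof does its real work, and your counter-modulo-two device does not substitute for it. The paper's closing argument is: since $\sigma_2$ is pure and, after the local application of Lemma \ref{lemm:onlytwo}, only two histories $h_1,h_2$ in $E_C^h$ carry positive probability, the leader randomizes at no more than one node inside $C$ along any positive-probability play through $C$. Hence if some node $v$ were visited three or more times, two consecutive visits of $v$ would be joined by a segment containing no randomization and no change of behavior, and this deterministic cycle can simply be short-circuited out of $\hat{\sigma}_1$ without altering which exit histories are reached or with what probabilities; the exit utilities $u(\sigma''_1,\sigma_2)(h_1)$, $u(\sigma''_1,\sigma_2)(h_2)$ are therefore preserved and $\sigma_2$ remains a best response. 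This excision argument---an existence proof by surgery on an arbitrary SSE, rather than a bottom-up construction of a randomization gadget in the style of Theorem \ref{thm:DAG}---is the missing idea; without it (or an equivalent), your structural claim remains an assertion, as your final paragraph concedes.
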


\begin{proof}

We first partition the graph $G$ into strongly connected components (SCC). 
Note that, every leaf node in $G$ forms their own singleton SCC.
We construct a directed graph $G_{SCC}$ as follows.
For each SCC $C$ in $G$, there is a corresponding vertex $C$ in $G_{SCC}$.    
$C'$ is a child of $C$ (and $C$ is a parent of $C'$) in $G_{SCC}$ if there exists an edge in $G$ from a vertex in $C$ to a vertex in $C'$ (we do not allow self loops and multiple edges in $G_{SCC}$). 
$C'$ is a successor of $C$ in $G_{SCC}$ if there exists a sequence $C=C_1,\ldots,C_k=C'$ such that $C_{i+1}$ is a child of $C_i$ for $i = 1,\ldots,k-1$ in .
By construction, $G_{SCC}$ is a DAG.
Fix a reverse topological order on SCCs.

Let $(\sigma_1,\sigma_2)$ be a SSE and we assume that $\sigma_2$ is a pure strategy (this assumptions is without loss of generality due to Lemma \ref{lemm:follower-pure}). 
We will modify $\sigma_1$ to construct a new strategy $\hat{\sigma_1}$ such that $(\hat{\sigma_1},\sigma_2)$ is also a SSE and each node is visited at most twice. 
We initialize $\hat{\sigma_1}(h) = \sigma_1(h)$ for each $h \in H$, in each step we will modify $\hat{\sigma_1}$ at a single SCC, according the reverse topological order.

Every time we modify $\hat{\sigma_1}$ at a SCC, we show that $\sigma_2$ is still the best response to the modified $\hat{\sigma_1}$, and hence modified $(\hat{\sigma_1},\sigma_2)$ continues to be a SSE. 

Now we will explain how to modify $\hat{\sigma_1}$ at a fixed SCC $C$. 
To avoid confusion, we use notations $\sigma'_1$ and $\sigma''_1$ respectively, to denote the strategy $\hat{\sigma_1}$ just before its modification in $C$ and just after its modification in $C$.

Note that once a play exits from a SCC $C$, it can never enter $C$ again.
Let $H_C$ denote the set of histories which enter $C$ for the first time,  
in other words, the set of histories of the form $v_0,\ldots,v_k$ such that $v_k$ is in $C$ and $v_i$ for $i<k$ is not in $C$.
For $h\in H_C$, let $\Pi^h$ denote the set of plays which are continuations of $h$.
Let $H_C^h$ denote the set of histories which are continuations of $h$ and end at a vertex in $C$, in other words, the set of continuations of $h$ of the form $v_0,\ldots,v_k$ such that $v_k$ is in $C$.
Similarly let $E_C^h$ denote the set of histories, which are continuations of $h$ and exit $C$ for the first time, in other words, the set of continuations of $h$ of the form $v_0,\ldots,v_k$ such that $v_{k-1}$ is in $C$ and $v_k$ is not in $C$.

Fix a history $h \in H_C$ and let $v_h \in C$ be the final node of $h$.  
We now will define the behaviour of $\sigma''_1$ at every history in $H_C^h$, so that in any play in $\Pi^h$ with positive support in $(\sigma''_1,\sigma_2)$, each node in $C$ is visited at most twice. 
If every $h'\in H_C^h$ with positive support in $(\sigma'_1,\sigma_2)$ visits each node in $C$ at most twice, then we let $\sigma''_1(h') = \sigma'_1(h')$ for each $h' \in H_C^h$. 
Now assume that there is a history in $H_C^h$ with positive support in $(\sigma'_1,\sigma_2)$ which visits some node in $C$ more than twice.

We will now provide some intuition behind the construction of $(\sigma''_1,\sigma_2)$. We will construct $(\sigma''_1,\sigma_2)$ in a way that each play in $\Pi^h$ exits $C$ via at most two histories. This implies that the leader randomizes in at most one node in $C$ (note that $\sigma_2$ is a pure strategy). So, if there is a path $\pi \in \Pi^h$ such that a node $v$ in $C$ is visited at least 3 times, no randomization takes place between two consecutive visits of $v$. So, we can modify $\sigma''_1$ so that this part is short circuited to ensure that $v$ is visited at most twice. We formalize the construction as follows.

Each play in $\Pi^h$ is a continuation of exactly one history in $E_C^h$. 
We will now show that it is possible to construct $\sigma''_1$ in a way that there are at most two histories in $E_C^h$, such that any play in $\Pi^h$ with positive support in the $(\sigma''_1,\sigma_2)$, is continuation of one of them.
Let $h_1,\ldots,h_k$ be the histories in $E_C^h$ with positive support in the $(\sigma'_1,\sigma_2)$, assume that $h_i$ is realized with probability $p_i(>0)$. Note that $p_1 + \ldots + p_k = p$.
Due to Lemma \ref{lemm:onlytwo}, there exist two histories $h_1$ and $h_2$, and probability $\hat{p}$ such that 

\vspace{-5mm}
\begin{equation}\label{eq:onlytwo}
\hat{p} \cdot u(\sigma'_1,\sigma_2)(h_1) + (p-\hat{p}) \cdot u(\sigma'_1,\sigma_2)(h_2) \ge p_1 \cdot u(\sigma'_1,\sigma_2)(h_1) + \ldots + p_k \cdot u(\sigma'_1,\sigma_2)(h_k)
\end{equation}

As the histories $h_1$ and $h_2$ are realized with positive probabilities in SSE $(\sigma'_1,\sigma_2)$, due to equation \ref{eq:onlytwo} the leader can commit to strategy (which will be $\sigma''_1$) which achieves the histories $h_1$ and $h_2$ with probabilities $\hat{p}$ and $p-\hat{p}$ respectively. 
Since $\sigma'_1$ and $\sigma''_1$ are identical at all histories in $E_C^h$, we have $u(\sigma''_1,\sigma_2)(h_1) = u(\sigma'_1,\sigma_2)(h_1)$ and $u(\sigma''_1,\sigma_2)(h_2) = u(\sigma'_1,\sigma_2)(h_2)$.
Hence, $\sigma_2$ is best response to $\sigma_2$.
Since $(\sigma''_1,\sigma_2)$ exits $C$ via at most two histories.
\qed
\end{proof} 

\begin{theorem}\label{thm:DG}\rm
For sequential games on DGs without chance nodes,there exists a polynomial time algorithm which computes such a SSE with strategy profile with memory. 
\end{theorem}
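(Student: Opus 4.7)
The plan is to lift the DAG algorithm of Theorem~\ref{thm:DAG} to general directed graphs by processing the strongly connected components (SCCs) in reverse topological order and exploiting Proposition~\ref{prop:twotimes} to bound the number of visits to each node. First I would construct the condensation graph $G_{SCC}$ (which is a DAG by definition) and order its vertices in reverse topological order. Since edges between distinct SCCs never form cycles, any commitment set that is already computed for a ``lower'' SCC can be treated exactly as a terminal/leaf payoff profile when we process a higher SCC, mirroring the base case of the DAG algorithm.

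The heart of the adaptation lies in handling each SCC $C$ internally. By Proposition~\ref{prop:twotimes}, there is a SSE in which every node (and in particular every node of $C$) is visited at most twice on any play with positive probability. This suggests performing the upward pass from Section~\ref{sec:algoDAG} \emph{twice} inside $C$. In the first iteration, for each node $v\in C$, I would compute a preliminary commitment set $S_v^{(1)}$ by using (i) the already-finalized commitment sets of the children SCCs of $C$ at ``exit edges'' from $C$, and (ii) treating every in-SCC edge as if leading to a node whose commitment set is empty. This yields the outcomes reachable when $v$ is visited for the last time. In the second iteration, I redo the pass within $C$, now allowing every node $w\in C$ to also use the outcomes from $S_w^{(1)}$ as the ``target'' on the second visit. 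The resulting $S_v^{(2)}$ captures exactly the outcomes reachable from $v$ when each node of $C$ may be visited at most twice, which by Proposition~\ref{prop:twotimes} is sufficient to represent the optimal commitment. The labels $L(v,p)$ are extended analogously, now carrying a visit-index $\{1,2\}$ so that the downward pass can distinguish the first and second visit to any $v\in C$.

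The downward pass is almost unchanged from Section~\ref{sec:algoDAG}. The only modification is that when the procedure \textsc{Strategy} revisits a node $v\in C$, the algorithm uses the label built in the second iteration of the upward pass, and the memory state is augmented with a visit counter $\{1,2\}$ for each $v$ belonging to a non-trivial SCC. The red-flag mechanism, the randomization device at divergence points, and the recommendation encoding in memory are all inherited verbatim, so the correctness argument reduces to showing (a) that $S_v^{(2)}$ faithfully represents every commitment realizable in a SSE (which follows from Proposition~\ref{prop:twotimes} applied SCC-by-SCC in reverse topological order), and (b) that the follower's incentive to obey the recommendations is preserved, which holds because Lemma~\ref{lem:punishing} gives a punishing memoryless strategy on acyclic graphs, and by our assumption of non-negative payoffs (to rule out Example~\ref{ex:infinite}) the leader can also punish within an SCC by driving the play to the worst terminal reachable from the offending node.

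For the complexity, the condensation has $O(|V|)$ SCCs, each SCC is processed by two upward passes plus one downward pass, and each pass is polynomial exactly as analyzed in Section~\ref{sec:algoDAG}, with the commitment-set sizes still bounded by $O(|V|)$ because only extreme points of convex hulls are retained and the visit counter multiplies the state space by a factor of $2$. Hence the overall runtime remains polynomial in $|V|$, and the memory set $M$ has size linear in $|V|$ (each node contributes at most the recommendation entries and one visit bit, plus the single red-flag state). The main obstacle I expect is a careful proof that the two-iteration upward pass really enumerates every extremal commitment realizable under the ``visit at most twice'' constraint, i.e.\ that no useful outcome is lost by computing $S_v^{(2)}$ purely from $S_v^{(1)}$ and the exit commitments; this requires arguing, in the spirit of Lemma~\ref{lemm:onlytwo} and Proposition~\ref{prop:twotimes}, that any mixed commitment using more than two exit histories from $C$ can be replaced by one using at most two, and that between the two visits to any $v\in C$ no further randomization is needed.
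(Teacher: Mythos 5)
Your proposal follows essentially the same route as the paper: decompose the graph into SCCs, fix a reverse topological order on the condensation, invoke Proposition~\ref{prop:twotimes} to justify that each node need be visited at most twice, run the upward pass of Section~\ref{sec:algoDAG} twice (the second pass feeding on the first) and keep the downward pass, red-flag punishment, and memory construction unchanged, with the same polynomial bounds. In fact you supply more operational detail than the paper itself, which proves Proposition~\ref{prop:twotimes} and then asserts the theorem with only this sketch, so your elaboration (including the honestly flagged obstacle about the two-iteration pass capturing all extremal commitments) is a faithful, if anything more explicit, rendering of the paper's argument.
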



\section{Algorithm for Games with Chance Nodes on DAGs}\label{sec:algochanceDG}

In this section, we consider the sequential games which contain chance nodes. 
We restrict the game graphs to DAGs. 
In presence of chance nodes, \cite{letchford2010} show that it is NP-Hard to compute SSE even when the graph is a tree. 
They show this by reducing an arbitrary instance of KNAPSACK problem to an sequential game (on trees with chance nodes). 
Since the reduction restricted to the games on trees, the same reduction works to show the NP-Hardness to compute SSE with memory on games on trees or DAGS. 

\begin{prop}
In sequential games with chance nodes, it is NP hard to compute a SSE with memory. 
\end{prop}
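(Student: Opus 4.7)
The plan is to argue that the NP-hardness claim follows directly from the Letchford--Conitzer reduction from KNAPSACK to SSE on trees with chance nodes, once one observes that on a tree the notion of SSE with memory collapses to the usual SSE. Since trees are a special case of DAGs, it suffices to show hardness already in that restricted setting.

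The key structural observation is the following. In a tree, every node $v$ is reached by a unique path from the root $v_0$, i.e.\ there is a unique history $h_v \in H_v$. Consequently, for any strategy profile with memory $(\sigma, M, \memupdate, m_0)$, the memory state upon arriving at $v$ is completely determined by $h_v$ (iteratively applying $\memupdate$ along the unique prefix of $h_v$ yields a unique memory state $m_v$). Hence the behavior prescribed by a strategy with memory at $v$ is just a single distribution $\sigma_i(m_v, v, \cdot)$, which is precisely a behavioral (history-independent) strategy. In other words, on a tree the class of strategy profiles with memory induces exactly the same realizable strategy profiles, and therefore the same attainable leader payoffs, as the class of classical history-independent behavioral strategies. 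Chance nodes do not affect this: they merely determine probabilities along the unique child-to-child structure, and the unique-history property is preserved.

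Given this equivalence, the optimal value of a SSE with memory on a tree with chance nodes coincides exactly with the optimal value of a classical SSE on that same tree. Letchford and Conitzer~\cite{letchford2010} exhibit a polynomial-time reduction from KNAPSACK to the problem of computing a classical SSE on such a tree; their construction uses chance nodes only and is entirely tree-structured. Reusing their instance verbatim, any algorithm computing SSE with memory in polynomial time would in particular recover the leader's optimal SSE value in their construction, thereby solving KNAPSACK in polynomial time. This yields NP-hardness of SSE with memory in the presence of chance nodes, even when the underlying graph is a tree (hence a DAG).

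The only step requiring care is the first one, namely the verification that memory genuinely provides no additional expressive power on trees. The argument above reduces it to the uniqueness of histories; once this is spelled out, the hardness reduction is purely a black-box invocation of the prior work, so no further combinatorial machinery is needed.
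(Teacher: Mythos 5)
Your proposal is correct and takes essentially the same route as the paper: both observe that the Letchford--Conitzer KNAPSACK reduction uses only trees with chance nodes, and that on trees the hardness transfers verbatim to SSE with memory. Your explicit justification of the transfer---that each node of a tree has a unique history, so the memory state there is deterministic and strategy profiles with memory collapse to ordinary behavioral strategies---is exactly the step the paper leaves implicit, and you argue it correctly.
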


With no hope of finding polynomial time algorithm to find an 
exact SSE, we focus our attention to compute an approximation version of SSE, which gives the leader a payoff arbitrarily close to his SSE payoff. 
To simplify the arguments we transform the graph from a DAG to a binary DAG (the out degree of each node is at most 2). 
This transformation is without loss of generality, since the strategic structure of the game remains same. 
The transformation is done by sequentially adding auxiliary nodes for each node which has more than 2 children. 
These auxiliary nodes ensure that the out degree of each node is at most 2, while maintaining the same strategic structure of the game.
An example of a transformation step in which auxiliary nodes are added when a leader node has 4 children is illustrated in Fig. \ref{binaryDAG}. 
For a node with $k$ children, we require at most $k-1$ auxiliary nodes. 
Hence, the new graph has $O(n^2)$ auxiliary nodes, and the height of the new graph is at most $\log(n)$ more than the original graph. 

\begin{figure}
\centering
\begin{tikzpicture}
		[node distance=0.4cm and 0.8cm, label distance=-8pt]
	\tikzstyle{leader}=[circle,draw,inner sep=1.5]
   	\tikzstyle{leaf}=[]
 	\tikzstyle{follower}=[rectangle,draw,inner sep=2]
  	\tikzstyle{nature}=[diamond,draw,inner sep=1.5]
  	\node(a) at (-3,0.75) [leader]{$v$};

  	\node(b) at (-4.5,-0.75) [follower]{$w_1$};
  	\node(c) at (-3.5,-0.75)[follower]{$w_2$};
  	\node(d) at (-2.5,-0.75) [follower]{$w_3$};
  	\node(e) at (-1.5,-0.75) [follower]{$w_4$};
  	
  	\node(f) at (3,0.75) [leader]{$v$};
  	\node(g) at (2,0.25) [leader]{$a_1$};
  	\node(h) at (4,0.25) [leader]{$a_2$};

    \node(i) at (1.5,-0.75) [follower]{$w_1$};
    \node(j) at (2.5,-0.75) [follower]{$w_2$};
    \node(k) at (3.5,-0.75) [follower]{$w_3$};
    \node(l) at (4.5,-0.75) [follower]{$w_4$};
  	
  	\draw[->] 
  	      (a) edge (b) 
  	      (a) edge (c)
  	      (a) edge (d) 
  	      (a) edge (e)
  	      
  	      (f) edge (g)
  	      (f) edge (h)
  	      
  	      (g) edge (i)
  	      (g) edge (j)
  	      (h) edge (k)
  	      (h) edge (l)
  	      ;
\end{tikzpicture}
\caption{A general DAG can be replaced by a new binary DAG obtained by adding auxiliary nodes, which corresponds to the same game.}
\label{binaryDAG}
\end{figure}
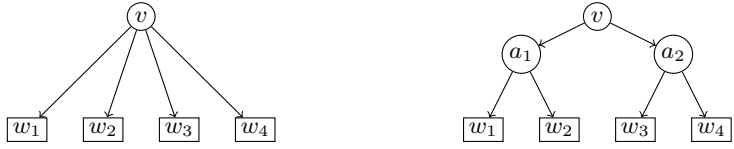


At each node $v$, algorithm constructs the value $A_v[k]$ for each $k \in \{U_1,U_1+\delta,\ldots,U_2\}$  which represents the maximum utility that the follower can obtain starting at $v$, conditional on the fact that the leader can commit to a strategy which with a best response from the follower gives the leader utility at least $k$ starting at $v$. 

\underline{Leaf node:} If $v$ is a leaf where leader obtains utility $u_l$ and follower obtains utility $u_f$, then 
\begin{equation}
A_v[k] =
\begin{cases}
u_f, & \text{if }  k \leq u_l, \\
-\infty, & \text{if }  k > u_l
\end{cases}
\end{equation}

\underline{Chance node:} If a chance node $v$ has only one child $w$, then $A_v[k] = A_w[k]$ for each $k$. 
Now assume that $v$ has two children $L$ and $R$.
If the chance node plays $L$ with probability $p$ and $R$ with probability $1-p$, the payoff $k$ can be guaranteed for the leader by achieving payoff $i$ at the subgame rooted at $L$ and payoff $j$ at subgame rooted at $R$, such that the inequality $pi+(1-p)j \ge k$ is satisfied. 
Hence, the maximum utility for the follower with guarantee that the leader gets at least $k$ can be calculated as the following maximization problem.

\begin{equation}
A_v[k] = \max_{i,j}\set{p A_L[i] + (1 - p) A_R[j] \hspace{2mm} \vert \hspace{2mm} p i + (1 - p) j \geq k}.
\label{eq:chancenode}
\end{equation}
\underline{Follower node:} If a follower node $v$ has only one child $w$, then $A_v[k] = A_w[k]$ for each $k$. Now assume that $v$ has two children $L$ and $R$. 
At node $v$ if the follower goes to $L$, then $A_L[k]$ is the maximum utility the follower can obtain if the leader gets at least $k$.
If $A_L[k] < \mu_2(R)$, then the leader can not guarantee guarantee himself payoff of at least $k$ while follower plays $L$, since the follower guarantees strictly higher payoff of at least $\mu_2(R)$ by playing to $R$.
Similarly, if $A_R[k] < \mu_2(L)$, then the leader can not guarantee guarantee himself payoff of at least $k$ while follower plays $R$, since the follower guarantees strictly higher payoff of at least $\mu_2(L)$ by playing to $L$.
We consider 4 cases.  

(a) If $A_L[k] < \mu_2(R)$ and If $A_R[k] < \mu_2(L)$, then it is impossible for the leader to guarantee utility $k$, so $A_v[k] = -\infty$.  

(b) If $A_L[k] \ge \mu_2(R)$ and If $A_R[k] < \mu_2(L)$, then the leader can not incentivize the follower to play $R$ while guaranteeing utility $k$ for himself. However, he can play the strategy corresponding to $A_L[k]$ at $L$ and punishing strategy at $R$, which incentivizes the follower to play $R$ and guarantees utility of at least $k$ for the leader. Hence, $A_v[k] = A_L[k]$.

(c) If $A_L[k] < \mu_2(R)$ and If $A_R[k] \ge \mu_2(L)$, then with the similar reasoning, $A_v[k] = A_R[k]$. 

(d) If $A_L[k] \ge \mu_2(R)$ and If $A_R[k] \ge \mu_2(L)$, the leader can incentivize the follower to play either $L$ or $R$. So, $A_v[k] = \max\{A_L[k],A_R[k]\}$

In the case (d) if leader had access to a randomization device, it would be possible for the leader to incentivize the follower to play a mixed action at the node $v$, which may be further beneficial for the leader.
Instead in our setting, follower can use the memory states to imitate the mixed actions.  
For a given pair of leader and follower nodes, we define a procedure $achievable$ to check if this is possible. 

\textbf{Procedure $\bm{achievable}$}

For a leader node $v$, follower node $w$ where $v$ is an ancestor of $w$ and real number $U$,
we define a procedure $achievable(v,w,U)$ which determines if there exists a leader's strategy in which he randomizes at $v$, and a best response of the follower such that $w$ is reached with positive probability, conditional on the guarantee that the follower obtains the payoff of $U$ after reaching $w$. 
This procedure first creates a new game $\mathcal{G}_a$ as follows.  
The graph $G_a$ is constructed from the original graph $G$ by deleting all the nodes which are not ancestors of $w$ and adding one leaf node for each of the remaining follower nodes.
An edge is added from each of the follower node to the corresponding leaf node. 
The payoff for the follower at the leaf node corresponding to a follower node $v$ is $\max_{\text{child } v' \text{ of } v \text{ not in } G_a} \mu_2(v')$ and the payoff for the leader at that leaf node is 0. 

Now we perform an upward pass to compute for each node $x$ in $G_a$, the utility $\hat{u}_2(x)$ the follower can guarantee. To begin, we have $\hat{u}_2(w) = U$ and for leaf node $l$ with follower's utility $u$, let $\hat{u}_2(l) = u$. For a leader node $x$, let $\hat{u}_2(x) = \min \{\hat{u}_2(L),\hat{u}_2(R) | L,R \text{ children of } x\}$. For a follower node $x$, let $\hat{u}_2(x) = \max \{\hat{u}_2(L),\hat{u}_2(R) | L,R \text{ children of } x\}$. For a chance node $x$ which plays $L$ with probability $p$ and $R$ with probability $1-p$, $\hat{u}_2(x) = p \hat{u}_2(L) + (1-p) \hat{u}_2(R)$. 

Finally, we conclude that there exists such strategy for the leader, if $v$ and both of his children $L,R$ are in $G_a$ such that $\hat{u}_2(v) = \hat{u}_2(L) = \hat{u}_2(R) = U$.

\underline{Leader node:} If a leader node $v$ has only one child $w$, then $A_v[k] = A_w[k]$ for each $k$. 
Now assume that $v$ has two children $L$ and $R$. 
Leader can commit to payoff of $k$ in following two ways.

(i) The leader plays $L$ with probability $p$ and $R$ with probability $1-p$, the payoff of $k$ can be guaranteed by achieving payoff $i$ at the subgame rooted at $L$ and payoff $j$ at subgame rooted at $R$, while satisfying the inequality $pi+(1-p)j \ge k$.
Hence, the maximum utility for the follower in such a case, denoted by $X_v[k]$, can be calculated as the following maximization problem.     
\begin{equation}
X_v[k] = \max_{i, j, p} \: \set{p A_L[i] + (1- p) A_R[j] \hspace{2mm} \vert \hspace{2mm} p i + (1 - p) j \geq k}.
\label{eq:leadernode}
\end{equation}

(ii) Continuing the discussion of case (d) in the follower node, 
if leader had access to a randomization device, it would be possible to incentivize the follower to play a mixed action at the node $v$, which may be further beneficial for the leader.
Instead, leader can use memory for a randomization devise as follows.  

This can be done at a leader's node which is an ancestor to such a follower node in a following way.

For each of the follower node $w$ such that $v$ is ancestor of $w$, perform the procedure $achievable(v,w,X_v[k])$. For every such $w$ (with children $L_w$ and $R_w$) for which the procedure is successful, define 
\begin{equation}
Y_w[k] = \max_{i, j, p} \: \set{p A_{L_w}[i] + (1- p) A_{R_w}[j] \hspace{2mm} \vert \hspace{2mm} p i + (1 - p) j \geq k}.
\end{equation}

So, $Y_w[k]$ is a payoff that follower can get conditional on the leader gets at least $k$, when the follower chooses $L$ with probability $p$ and $R$ with probability $1-p$. 
The leader can incentivize the follower by choosing $L$ and $R$ himself at node $v$ with the same probabilities, as a suggestion that the follower should play $L$ at $w$ if he observes leader playing $L$ at $v$ and play $R$ otherwise.
If the follower does not comply, the leader commits to playing the punishing at subsequent nodes.
$Y_v[k] = \max_w \: \{X_w[k] \hspace{2mm} | \hspace{2mm} achievable(v,w,X_v[k]) \text{ is successful} \}$.
Finally, $A_v[k] = \max \{X_v[k], Y_v[k]\}$.
\qed

\section{Concluding remarks}\label{sec:conclude}

We study the two player, non-zero sum, perfect information sequential games on directed acyclic graphs and directed graphs. 
The players can play history-dependent, mixed behavioral strategies. 
We define an alternate formulation called strategy profiles with memory for the history-dependent strategy profiles in which players can base their decisions only on the memory states. 
Using this formulation, we establish that strategies with memory can be described efficiently if the memory size is polynomial.

While it was shown that it is NP-Hard  to compute the SSE for games on DAGs where the strategies are history-independent, we construct a polynomial time algorithm to compute the SSE for games on DAGs without chance nodes with strategy profiles with memory. 
Since the memory size of this strategy profile is linear in the number of nodes, it can be described very efficiently.
We modify our algorithm to work in games on general directed graphs without chance nodes, by proving an existence of SSE which visits each node at most twice. 
We also discuss the approximate version of SSE with memory for games on DAGs with chance nodes. 

We see many interesting extensions of our model and open question to be addressed in future work. 
Our algorithm and the construction of memory states can be used in various applied models in security games. 
One future extension would be to consider the games in which the leader and the follower play simultaneous moves in the dynamic games.

\section*{Acknowledgements}
This research was supported by the Czech Science Foundation (no. 19-24384Y) and by the Combat Capabilities Development Command Army Research Laboratory and was accomplished under Cooperative Agreement Number W911NF-13-2-0045 (ARL Cyber Security CRA). The views and conclusions contained in this document are those of the authors and should not be interpreted as representing the official policies, either expressed or implied, of the Combat Capabilities Development Command Army Research Laboratory or the U.S. Government. The U.S. Government is authorized to reproduce and distribute reprints for Government purposes not withstanding any copyright notation here on.

\bibliographystyle{splncs03}
\bibliography{ref}{}

\end{document}